\newtheorem{definition}{Definition}[section]
\newtheorem{theorem}{Theorem}[section]
\newtheorem{corollary}{Corollary}[section]
\newtheorem{lemma}{Lemma}[section]
\Crefname{lemma}{Lemma}{Lemmas}
\title{Parameterized Algorithms for String Matching to DAGs: Funnels and Beyond\thanks{This work was partially funded by the Academy of Finland
(grants No. 352821, 328877). I am very grateful to Alexandru~I.~Tomescu for initial discussions on funnel algorithms, to Veli Mäkinen for discussions on applying KMP in DAGs, to Massimo Equi and Nicola Rizzo for the useful discussions, and to the anonymous reviewers for their useful comments.}}
\author{Manuel C\'aceres\thanks{Department of Computer Science, University of Helsinki, Finland, \texttt{manuel.caceresreyes@helsinki.fi}.}}
\date{}
\newcommand{\paths}{\mathcal{P}}
\newcommand{\source}{\mathcal{S}}
\newcommand{\T}{\mathcal{T}}
\newcommand{\ST}{\mathcal{ST}}
\begin{document}

\maketitle

\begin{abstract}
The problem of String Matching to Labeled Graphs (SMLG) asks to find all the paths in a labeled graph $G = (V, E)$ whose spellings match that of an input string $S \in \Sigma^m$. SMLG can be solved in quadratic $O(m|E|)$ time~[Amir et~al., JALG], which was proven to be optimal by a recent lower bound conditioned on SETH~[Equi et~al., ICALP 2019]. The lower bound states that no strongly subquadratic time algorithm exists, even if restricted to directed acyclic graphs (DAGs).

In this work we present the first parameterized algorithms for SMLG in DAGs. Our parameters capture the topological structure of $G$. All our results are derived from a generalization of the Knuth-Morris-Pratt algorithm~[Park and Kim, CPM 1995] optimized to work in time proportional to the number of prefix-incomparable matches.

To obtain the parameterization in the topological structure of $G$, we first study a special class of DAGs called funnels~[Millani et~al., JCO] and generalize them to $k$-funnels and the class $\ST_k$. We present several novel characterizations and algorithmic contributions on both funnels and their generalizations.
\end{abstract}

\section{Introduction}

Given a labeled graph $G = (V, E)$ (vertices labeled with characters) and a string $S$ of length $m$ over an alphabet $\Sigma$ of size $\sigma$, the problem of \emph{String Matching to Labeled Graph (SMLG)} asks to find all paths of $G$ spelling $S$ in their characters; such paths are known as \emph{occurrences} or \emph{matches} of $S$ in $G$. This problem is a generalization of the classical \emph{string matching (SM)} to a text $T$ of length $n$, which can be encoded as an SMLG instance with a path labeled with $T$. Labeled graphs are present in many areas such as information retrieval~\cite{conklin1987hypertext,nielsen1990hypertext,baeza1999modern}, graph databases~\cite{angles2008survey,angles2017foundations,perez2009semantics,barcelo2013querying} and bioinformatics~\cite{computational2018computational}, SMLG being a primitive operation to locate information on them.

It is a textbook result~\cite{aho1974design,cormen2022introduction,sedgewick2011algorithms} that the classical SM can be solved in linear $O(n+m)$ time. For example, the well-known \emph{Knuth-Morris-Pratt} algorithm (KMP)~\cite{knuth1977fast} preprocess $S$ and then scans $T$ while maintaining the longest matching prefix of $S$. However, for SMLG a recent result~\cite{backurs2016regular,equi2019complexity} shows that there is no strongly subquadratic $O(m^{1-\epsilon}|E|), O(m|E|^{1-\epsilon})$ time algorithm unless the \emph{strong exponential time hypothesis (SETH)} fails, and the most efficient current solutions~\cite{amir2000pattern,navarro2000improved,rautiainen2017aligning,jain2020complexity} match this bound, thus being optimal in this sense. Moreover, these algorithms solve the approximate version of SMLG (errors in $S$ only) showing that both problems are equally hard under SETH, which is not the case for SM~\cite{DBLP:journals/siamcomp/BackursI18}.\\

\par{\textbf{The history of (exact) SMLG.}} SMLG can be traced back to the publications of Manber and Wu~\cite{manber1992approximate} and Dubiner et~al.~\cite{dubiner1994faster} where the problem is defined for the first time, and solved in linear time on directed trees by using an extension of KMP. Later Akutsu~\cite{akutsu1993linear} used a sampling on $V$ and a suffix tree of $S$ to solve the problem on (undirected) trees in linear time and Park and Kim~\cite{park1995string} obtained a $O(N + m|E|)$\footnote{$N$ is the total length of the labels in $G$ in a more general version of the problem where vertices are labeled with strings.} time algorithm for directed acyclic graphs (DAGs) by extending KMP on a topological ordering of $G$ (we call this the \emph{DAG algorithm}). Finally, Amir et al.~\cite{amir2000pattern} showed an algorithm with the same running time for general graphs with a simple and elegant idea that was later used to solve the approximate version~\cite{rautiainen2017aligning,jain2020complexity}, and that has been recently generalized as the labeled product~\cite{rizzo2021labeled}. The stricter lower bound of Equi et~al.~\cite{equi2019complexity} shows that the problem remains quadratic (under SETH) even if the problem is restricted to deterministic DAGs with vertices of two kinds: indegree at most $1$ and outdegree $2$, and indegree $2$ and outdegree at most $1$~\cite[Theorem 1]{equi2019complexity}, or if restricted to undirected graphs with degree at most $2$~\cite[Theorem 2]{equi2019complexity}. Furthermore, they show how to solve the remaining cases (in/out-trees whose roots can be connected by a cycle) in linear time by an extension of KMP. Later they showed~\cite{equi2021graphs} that the quadratic lower bound holds even when allowing polynomial indexing time.\\

\par{\textbf{An (important) special case.}} Gagie et~al.~\cite{gagie2017wheeler} introduced \emph{wheeler graphs} as a generalization of prefix sortable techniques~\cite{ferragina2000opportunistic,grossi2000compressed,ferragina2005structuring,bowe2012succinct,siren2014indexing} applied to labeled graphs. On wheeler graphs, SMLG can be solved in time $O(m\log{|E|})$~\cite{gagie2017wheeler} after indexing, however, it was shown that the languages recognized by wheeler graphs (intuitively the set of strings they encode), is very restrictive~\cite{alanko2020regular,alanko2021wheeler}. Latter, Cotumaccio and Prezza~\cite{cotumaccio2021indexing} generalized wheeler graphs to $p$\emph{-sortable graphs}, capturing every labeled graph by using the parameter $p$: the minimum width of a colex relation over the vertices of the graph. On $p$-sortable graphs, SMLG can be solved in time $O(mp^2\log{(p\sigma}))$ after indexing, however, the problems of deciding if a labeled graph is wheeler or $p$-sortable are NP-hard~\cite{DBLP:conf/esa/GibneyT19}. In a recent work, Cotumaccio~\cite{cotumaccio2022graphs} defined \emph{$q$-sortable graphs} as a relaxation of $p$-sortable ($q < p$), which can be indexed in $O(|E|^2+|V|^{5/2})$ time but still solve SMLG in time $O(mq^2\log{(q\sigma}))$.
 
\subsection{Our results}
We present parameterized algorithms for SMLG in DAGs. Our parameters capture the topological structure of $G$. These results are related to the line of research ``FPT inside P''~\cite{giannopoulou2017polynomial,caceres2021safety,fomin2018fully,koana2021data,abboud2016approximation,caceres2021a,caceres2022sparsifying,caceres2022minimum,makinen2019sparse,ma2022graphchainer} of finding parameterizations for polynomially-solvable problems.

All our results are derived from a new version of the DAG algorithm~\cite{park1995string}, which we present in \Cref{sec:dag-incomparable}. Our algorithm is optimized to only carry \emph{prefix-incomparable} matches (\Cref{def:prefix-incomparable}) and process them in time proportional to their size (\Cref{lemma:parameterized-vertex} further optimized in \Cref{cor:sorting-parameterized-vertex,cor:linear-parameterized-vertex}). Prefix-incomparable sets suffice to capture all prefix matches of $S$ ending in a vertex $v$ (\Cref{def:bv-piv}). By noting that the size of prefix-incomparable sets is upper-bounded by the structure of $S$ (\Cref{lemma:bounded-size-prefix-incomparable} in \Cref{sec:the-pattern}), we obtain a parameterized algorithm (\Cref{thm:dag-algorithm-pattern} in \Cref{sec:the-pattern}) that beats the DAG algorithm on periodic strings.

To obtain the parameterization on the topological structure of the graph we first study and generalize a special class of DAGs called \emph{funnels} in \Cref{sec:funnels-and-beyond}. \\

\par{\textbf{Funnels.}} Funnels are DAGs whose source-to-sink paths contain a \emph{private} edge that is not used by any other source-to-sink-path. Although more complex that in/out-forests, its simplicity has allowed to efficiently solve problems that remain hard even when the input is restricted to DAGs, including: DAG partitioning~\cite{millani2020efficient}, $k$-linkage~\cite{millani2020efficient}, minimum flow decomposition~\cite{khan2022safety,khan2022improving,khan2022optimizing}, a variation of network inhibition~\cite{lehmann2017thecomp} and SMLG (this work). Millani et~al.~\cite{millani2020efficient} showed that funnels can also be characterized by a partition into an in-forest plus an out-forest (the \emph{vertex partition} characterization), or by the absence of certain forbidden paths (the \emph{forbidden path} characterization), and propose how to find a minimal forbidden path in quadratic $O(|V|(|V|+|E|))$ time and a recognition algorithm running in $O(|V|+|E|)$ time. They used the latter to develop branching algorithms for the NP-hard problems of vertex and edge deletion distance to a funnel, obtaining a fix-parameter quadratic solution. Analogous to the minimum feedback set problem~\cite{karp1972reducibility}, the vertex (edge) deletion distance to a funnel problem asks to find the minimum number of vertices (edges) that need to be removed from a graph so that the resulting graph is a funnel.\\

We propose three (new) linear time recognition algorithms of funnels (\Cref{sec:three-recognition}), each based on a different characterization, improving the running time of the branching algorithm to parameterized linear time (see \Cref{sec:linear-distance}). We generalize funnels to $k$-funnels by allowing private edges to be shared by at most $k$ source-to-sink paths (\Cref{def:kprivate,def:kfunnel}). We show how to recognize them in linear time (\Cref{thm:k-funnel-linear-recognition}) and find the minimum $k$ for which a DAG is a $k$-funnel (\Cref{cor:exponential-min-k,lemma:linear-min-k}). We then further generalize $k$-funnels to the class of DAGs $\ST_k$ (\Cref{def:stk,lemma:subset-stk}), which (unlike $k$-funnels for $k>1$, see \Cref{fig:strict-containment}) can be characterized (and efficiently recognized, see \Cref{thm:stk-partitioning}) by a partition into a graph of the class $\source_k$ (generalization of out-forest, see \Cref{def:sk-tk}) and a graph of the class $\T_k$ (generalization of in-forest, see \Cref{def:sk-tk}).

We obtain our parameterized results in \Cref{sec:the-dag} by noting that, analogous to the fact that in KMP we only need the longest prefix match, in the DAG algorithm we can bound the size of the prefix-incomparable sets by the number of paths from a source or the number of paths to a sink, $\mu_s(v)$ and $\mu_t(v)$, respectively (\Cref{lemma:prefix-incomparable-topo-bound}).

\begin{restatable}{theorem}{paramAlgTopoOne}\label{thm:param-alg-top-1}
 Let $G = (V, E)$ be a DAG, $\Sigma$ a finite ($\sigma = |\Sigma|$) alphabet, $\ell: V \rightarrow \Sigma$ a labeling function and $S \in \Sigma^m$ a string. We can decide whether $S$ has a match in $G,\ell$ in time $O((|V|+|E|)k + \sigma m)$, where $k = \min(\max_{v\in V} \mu_s(v), \max_{v\in V} \mu_t(v))$.
 \end{restatable}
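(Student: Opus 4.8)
The plan is to run the prefix-incomparable version of the DAG algorithm from \Cref{sec:dag-incomparable}, applied either to $(G,\ell,S)$ or to a reversed instance, whichever has the smaller parameter. First I would recall that skeleton: preprocess $S$ into a full KMP automaton, storing all $\sigma$ transitions so that each automaton step costs $O(1)$, which takes $O(\sigma m)$ time; then fix a topological order of $G$ and process the vertices in that order, maintaining at each vertex $v$ the prefix-incomparable (\Cref{def:prefix-incomparable}) set $PI_v$ that, by \Cref{def:bv-piv}, represents all prefix matches of $S$ ending at $v$. We answer ``yes'' exactly when some $PI_v$ contains the length-$m$ prefix, i.e.\ a full occurrence of $S$ ending at $v$.

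Consider first the case $k = \max_{v\in V}\mu_s(v)$. The engine here is \Cref{lemma:prefix-incomparable-topo-bound}, which gives $|PI_v|\le \mu_s(v)\le k$ for every $v$. By \Cref{lemma:parameterized-vertex} and its sharpenings \Cref{cor:sorting-parameterized-vertex,cor:linear-parameterized-vertex}, computing $PI_v$ from the sets $PI_u$ of the in-neighbours $u$ of $v$ takes time $O\bigl((1+\mathrm{indeg}(v))\,k\bigr)$: each of the $\mathrm{indeg}(v)$ incoming sets has size at most $k$ and is pushed through a single $O(1)$-time automaton transition, producing $O(\mathrm{indeg}(v)\,k)$ candidate matches, which are then reduced to a prefix-incomparable set of size at most $k$ in time linear in the number of candidates. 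Summing over all vertices, $\sum_{v}(1+\mathrm{indeg}(v))\,k = O\bigl((|V|+|E|)\,k\bigr)$, and adding the preprocessing yields total time $O\bigl((|V|+|E|)k+\sigma m\bigr)$.

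For the case $k=\max_{v\in V}\mu_t(v)$ I would reduce to the previous one by reversal. A path $P$ spells $S$ in $(G,\ell)$ if and only if the reverse path $P^R$ spells $S^R$ in $(G^R,\ell)$, where $G^R$ is $G$ with all edges reversed and $S^R$ is $S$ read backwards; hence $S$ matches in $(G,\ell)$ iff $S^R$ matches in $(G^R,\ell)$. Moreover $v$-to-sink paths in $G$ are precisely source-to-$v$ paths in $G^R$, so $\mu_t^{G}(v)=\mu_s^{G^R}(v)$ and thus $\max_v\mu_s^{G^R}(v)=k$. Running the previous algorithm on $(G^R,\ell,S^R)$ therefore costs $O\bigl((|V|+|E|)k+\sigma m\bigr)$, plus $O(|V|+|E|+m)$ to build the reversed instance. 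To actually obtain the minimum of the two parameters without computing the (possibly exponentially large) counts $\mu_s,\mu_t$ exactly, I would simply run the two variants concurrently, interleaving their steps, and halt as soon as one of them terminates; since the faster one runs in time $O\bigl((|V|+|E|)k+\sigma m\bigr)$ by the two cases above, so does the whole procedure.

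Given \Cref{lemma:parameterized-vertex,lemma:prefix-incomparable-topo-bound}, what remains is essentially bookkeeping, and I expect the only delicate points to be: (i) the telescoping of the per-vertex costs into $(|V|+|E|)k$, which relies on charging automaton transitions to edges and hence on the $O(1)$-transition table that forces the $\sigma m$ term; and, more importantly, (ii) checking that the reversal preserves everything it should — that matches of $S$ in $G$ correspond bijectively to matches of $S^R$ in $G^R$, and that the prefix-incomparable machinery of \Cref{sec:dag-incomparable} applies to $(G^R,\ell,S^R)$ verbatim, so that the $\mu_t$ bound really does play the role of the $\mu_s$ bound there.
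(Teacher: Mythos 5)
Your proposal is correct and takes essentially the same route as the paper: bound $|PI_v|\le\mu_s(v)$ via \Cref{lemma:prefix-incomparable-topo-bound}, run the topological-order algorithm with the linear-time $PI_v$ updates of \Cref{cor:linear-parameterized-vertex} and the $O(\sigma m)$ automaton (this is \Cref{thm:dag-algorithm-dag-1}), and handle the $\mu_t$ case by reversing both $G$ and $S$ (\Cref{cor:dag-algorithm-dag-1}). The only minor difference is how the minimum of the two parameters is realized — the paper determines it by exponential search (\Cref{cor:exponential-min-k}) and then runs the corresponding side, while you dovetail the two runs and halt when one finishes — and both stay within $O((|V|+|E|)k+\sigma m)$.
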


In particular, this implies linear time algorithms for out-forests and in-forests, and for every DAG in $\source_k$ or $\T_k$ for constant $k$. Finally, we solve the problem on DAGs in $\ST_k$ (thus also in $k$-funnels), by using the vertex partition characterization of $\ST_k$ (\Cref{thm:stk-partitioning}), solving the matches in each part separately with \Cref{thm:param-alg-top-1}, and resolve the matches crossing from one part to the other with a precomputed data structure (\Cref{lemma:prefix-suffix}).

\begin{restatable}{theorem}{paramAlgTopoTwo}\label{thm:param-alg-top-2}
 Let $G = (V, E)$ be a DAG, $\ell: V \rightarrow \Sigma$ a labeling function and $S \in \Sigma^m$ a string. We can decide whether $S$ has a match in $G,\ell$ in time $O((|V|+|E|)k^2 + m^2)$, where $k = \max_{v\in V}(\min(\mu_s(v), \mu_t(v)))$.
 \end{restatable}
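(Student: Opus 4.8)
The plan is to exploit the \emph{locality} of the parameter: the condition $\max_{v}\min(\mu_s(v),\mu_t(v))\le k$ forces $G$ into the class $\ST_k$, so I would split $G$ along the $\ST_k$ vertex partition, handle each half with \Cref{thm:param-alg-top-1}, and then stitch the two halves together along the cut.

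First I would build the partition. Set $A=\{v\in V:\mu_s(v)\le\mu_t(v)\}$ and $B=V\setminus A$. Since $\mu_s$ is non-decreasing and $\mu_t$ non-increasing along edges, $u\to v$ with $v\in A$ gives $\mu_s(u)\le\mu_s(v)\le\mu_t(v)\le\mu_t(u)$, hence $u\in A$; so $A$ is closed under in-neighbours, every source of $G[A]$ is a source of $G$, and every sink of $G[B]$ is a sink of $G$. Therefore $\mu_s^{G[A]}(v)\le\mu_s^G(v)=\min(\mu_s(v),\mu_t(v))\le k$ for $v\in A$, i.e. $G[A]\in\source_k$, and symmetrically $G[B]\in\T_k$, while every edge between the two parts points from $A$ to $B$ --- exactly the situation of \Cref{thm:stk-partitioning}. (The value $k$ is not known a priori, but a standard doubling search over $k$, recomputing the $\mu$-values capped at the current guess, handles this at no asymptotic cost.) I would also discard all characters not occurring in $S$, so that WLOG $\sigma\le m$ and every $\sigma m$ term becomes an $m^2$ term.

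Next, since all cut edges go from $A$ to $B$, any match $v_1\cdots v_m$ of $S$ has $v_1,\dots,v_i\in A$ and $v_{i+1},\dots,v_m\in B$ for a unique $i\in\{0,\dots,m\}$: either it lies entirely in $G[A]$ ($i=m$), entirely in $G[B]$ ($i=0$), or it uses exactly one cut edge $(v_i,v_{i+1})$. I would run the DAG algorithm of \Cref{sec:dag-incomparable} on $G[A]$ (forward; parameter $\le k$ by the above and \Cref{lemma:prefix-incomparable-topo-bound}) and on the reversal of $G[B]$ fed the reversal of $S$ (again forward; parameter $\le k$ since paths to a sink of $G[B]$ are paths from a source of its reversal), both in $O((|V|+|E|)k+m^2)$ total time. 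As in \Cref{thm:param-alg-top-1} these runs detect every match contained in $G[A]$ or in $G[B]$; moreover they leave, at each $a\in A$, the prefix-incomparable set $\mathrm{pi}_a$ (size $\le k$) of prefix-lengths $i$ with $S[1..i]$ spelling a path in $G[A]$ ending at $a$, and, at each $b\in B$, an analogous incomparable set $\mathrm{si}_b$ (size $\le k$) of lengths $j$ with $S[m-j+1..m]$ spelling a path in $G[B]$ starting at $b$. A crossing match then exists iff some cut edge $(a,b)$ admits an $i\in\{1,\dots,m-1\}$ with $i$ a valid prefix-length at $a$ and $m-i$ a valid suffix-length at $b$; since any border of a spelled prefix is spelled by a suffix of the same path, the set of valid prefix-lengths at $a$ is the downward closure of $\mathrm{pi}_a$ under the border relation of $S$ (and similarly for $b$), so $\mathrm{pi}_a$ and $\mathrm{si}_b$ already carry all the needed information. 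Precomputing from $S$ the data structure of \Cref{lemma:prefix-suffix} in $O(m^2)$ time, each cut edge is then resolved with $O(|\mathrm{pi}_a|\cdot|\mathrm{si}_b|)=O(k^2)$ queries, i.e. $O(|E|k^2)$ in total; outputting ``yes'' iff some crossing match or some within-part match was found, the running time adds up to $O((|V|+|E|)k+|E|k^2+m^2)=O((|V|+|E|)k^2+m^2)$.

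The routine parts are the partition and the case split; the main obstacle is \Cref{lemma:prefix-suffix} itself --- proving that crossing matches can be tested using only the bounded-size incomparable descriptions at the two endpoints of a cut edge, after merely $O(m^2)$ preprocessing and in $O(k^2)$ time per edge. The delicate point is how the border structure of $S$ interacts with the simultaneous demand of a prefix and its complementary suffix: unlike plain KMP, one cannot keep just the longest match at either endpoint, so the argument must show that the prefix-/suffix-incomparable closures still suffice to recognise a full concatenated match.
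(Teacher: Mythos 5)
Your proposal is correct and follows essentially the same route as the paper's proof: obtain the $\ST_k$ vertex partition (the paper thresholds $\mu_s(v)\le k$ via \Cref{thm:stk-partitioning} after an exponential search for $k$, while you compare $\mu_s(v)\le\mu_t(v)$ --- an equivalent variant), run the $\source_k$/$\T_k$ algorithms on the two parts to detect within-part matches while collecting $PI_u$ and $SI_v$, and resolve matches crossing a cut edge with the $O(m^2)$-preprocessed table of \Cref{lemma:prefix-suffix} in $O(k^2)$ time per edge. There is no gap; your minor deviations (the partition rule and the remark that one may assume $\sigma\le m$) do not change the argument.
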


 \section{Preliminaries}
 We work with a (directed) graph $G = (V, E)$, a function $\ell: V \rightarrow \Sigma$ labeling the vertices of $G$ with characters from a finite alphabet $\Sigma$ of size $\sigma$, and a sequence $S[1..m] \in \Sigma^m$. \\
 
  \par{\textbf{Graphs.}} A graph $S = (V_S, E_S)$ is said to be a \emph{subgraph} of $G$ if $V_S \subseteq V$ and $E_S \subseteq E$. If $V' \subseteq V$, then $G[V']$ is the subgraph \emph{induced by} $V'$, defined as $G[V'] = (V', \{(u, v) \in E ~:~ u,v \in V'\})$. We denote $G^r = (V, E^r)$ to be the \emph{reverse} of $G$ ($E^r = \{(v, u) \mid (u, v) \in E\}$). For a vertex $v\in V$ we denote by $N^{-}_{v}$ ($N^{+}_{v}$) the set of \emph{in(out)-neighbors} of $v$, and by $d^{-}_{v} = |N^{-}_{v}|$ ($d^{+}_{v} = |N^{+}_{v}|$) its \emph{in(out)degree}. A \emph{source (sink)} is a vertex with zero in(out)degree. \emph{Edge contraction} of $(u, v) \in E$ is the graph operation that removes $(u, v)$ and merges $u$ and $v$. A \emph{path} $P$ is a sequence $v_{1},\ldots,v_{|P|}$ of different vertices of $V$ such that $(v_{i}, v_{i+1}) \in E$ for every $i \in \{1,\ldots, |P|-1\}$. We say that $P$ is \emph{proper} if $|P| \ge 2$, a \emph{cycle} if $(v_{|P|}, v_{1}) \in E$, and \emph{source-to-sink} if $v_1$ is  source and $v_{|P|}$ is a sink. We say that $u\in V$ ($e \in E$) \emph{reaches} $v\in V$ ($f \in E$) if there is a path from $u$ (the head of $e$) to $v$ (the tail of $f$). If $G$ does not have cycles it is called \emph{directed acyclic graph} (DAG). A \emph{topological ordering} of a DAG is a total order $v_{1}, \ldots, v_{|V|}$ of $V$ such that for every $(v_{i}, v_{j}) \in E$, $i < j$. It is known~\cite{kahn1962topological,tarjan1976edge} how to compute a topological ordering in $O(|V|+|E|)$ time, and we assume one ($v_{1}, \ldots , v_{|V|}$) is already computed if $G$ is a DAG\footnote{Our algorithms run in $\Omega(|V|+|E|)$ time.}. An \emph{out(in)-forest} is a DAG such that every vertex has in(out)degree at most one, if it has a unique source (sink) it is called an \emph{out(in)-tree}. The \emph{label} of a path $P = v_{1},\ldots,v_{|P|}$ is the sequence of the labels of its vertices, $\ell(P) = \ell(v_{1})\ldots\ell(v_{|P|})$. \\
 
 \par{\textbf{Strings.}} We say that $S$ has a \emph{match} in $G,\ell$ if there is a path whose label is equal to $S$, every such path is an \emph{occurrence} of $S$ in $G,\ell$. We denote $S[i..j]$ (also $S[i]$ if $i=j$, and the empty string if $j<i$) to be the segment of $S$ between position $i$ and $j$ (both inclusive), we say that it is \emph{proper} if $i > 1$ or $j<m$, a \emph{prefix} if $i = 1$ and a 
 \emph{suffix} if $j = m$. We denote $S^r$ to be the reverse of $S$ ($S^r[i] = S[m-i+1]$ for $i \in \{1,\ldots, m\}$). A segment of $S$ is called a \emph{border} if it is a proper prefix and a proper suffix at the same time. The \emph{failure function} of $S$, $f_{S}: \{1,\ldots, m\} \rightarrow \{0,\ldots, m\}$ (just $f$ if $S$ is clear from the context), is such that $f_{S}(i)$ is the length of the longest border of $S[1..i]$. We also use $f_{S}$ to denote the in-tree $(\{0,\ldots, m\}, \{(i, f_{S}(i)) \mid i \in \{1,\ldots, m\}\})$, also known as the \emph{failure tree}~\cite{10.5555/314464.314675} of $S$. By definition, the lengths of all borders of $S[1..i]$ in decreasing order are  $f_{S}(i), f^2_{S}(i), \ldots, 0$. The matching automaton of $S$, $A_S: \{0, \ldots, m\} \times \Sigma \rightarrow \{0, \ldots, m\}$, is such that $A_S(i, a)$ is the length of the longest border of $S[1..i]\cdot a$.
 It is known how to compute $f_{S}$ in time $O(m)$~\cite{knuth1977fast} and $A_{S}$ in time $O(\sigma m)$~\cite{aho1974design,cormen2022introduction,sedgewick2011algorithms}, and we assume they are already computed\footnote{Our algorithms run in $\Omega(\sigma m)$ time.}.
 
 \section{The DAG algorithm on prefix-incomparable matches}\label{sec:dag-incomparable}
 
 A key idea in our linear time parameterized algorithm is that of prefix-incomparable sets of the string $S$. We will show that one prefix-incomparable set per vertex suffices to capture all the matching information. See \Cref{fig:prefix-incomparable} for an example of these concepts.

 \begin{definition}[Prefix-incomparable]\label{def:prefix-incomparable}
 Let $S \in \Sigma^m$ be a string. We say that $i < j \in \{0, \ldots, m\}$ are prefix-incomparable (for $S$) if $S[1..i]$ is not a border of $S[1..j]$. We say that $B \subseteq \{0,\ldots, m\}$ is prefix-incomparable (for $S$) if for every $i < j\in B$, $i$ and $j$ are prefix-incomparable (for $S$).
 \end{definition}
 
  \begin{figure}
      \centering
      \includegraphics{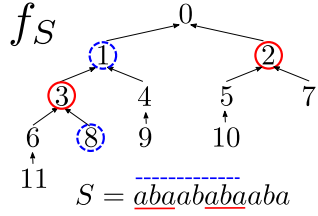}
      \caption{A string $S = abaababaaba$ and its failure tree $f_S$, with $w = |\{i \in \{0,\ldots, 11\} \mid \not \exists j, f_S(j) = i\}| = |\{7,8,9,10,11\}| = 5$. On the string it is shown in segmented (blue) and solid (red) lines that prefix $S[1..3] = aba$ is a border of prefix $S[1..8] = abaababa$, which can also be seen in the tree since $3$ is an ancestor (parent in this case) of $8$. In the tree two sets are shown $B_1 = \{1, 8\}$ in segmented (blue) circles, which is prefix-comparable, and $B_2 = \{2, 3\}$ in solid (red) circles, which is prefix-incomparable. If $B_1\cup B_2 \cup \{0\}$ is $B_v$ for some $v \in V$, then $PI_v = \{2,8\}$.}
      \label{fig:prefix-incomparable}
  \end{figure}

 In our algorithm we will compute for each vertex $v$ a prefix-incomparable set representing all the prefixes of $S$ that match with a path ending in $v$. More precisely, if $B_v$ is the set of all the prefixes of $S$ that match with a path ending in $v$, then the algorithm will compute $PI_v \subseteq B_v$ such that $PI_v$ is prefix-incomparable and for every $i \in B_v$ there is a $j \in PI_v$ such that $i$ is ancestor of $j$. Note that such a set always exists and it is unique, it corresponds to the leaves of $f_S[B_v]$. To obtain a linear time parameterized algorithm we show how to compute $PI_v$ from the sets $PI_u$, $u \in N^{-}_{v}$, in time parameterized by the size of these sets.
 
 \begin{definition}[$B_v, PI_v$]\label{def:bv-piv}
 Let $G = (V, E)$ be a DAG, $\ell: V \rightarrow \Sigma$ a labeling function, and $S \in \Sigma^m$ a string. For every $v \in V$ we define the sets:
 \begin{itemize}
     \item $B_v = \{i \in \{0,\ldots,m\} \mid \exists P \text{ path of } G\text{ ending in $v$ and } \ell(P) = S[1..i]\}$\footnote{Here we consider that the empty path always exists and its label is the empty string, thus $0\in B_v$.}
     \item $PI_v \subseteq B_v$ as the unique prefix-incomparable set such that for every $i \in B_v$ there is a $j \in PI_v$ such that $i = j$ or $S[1..i]$ is a border of $S[1..j]$
 \end{itemize}
 \end{definition}
 
 \begin{lemma}\label{lemma:parameterized-vertex}
 Let $G = (V, E)$ be a DAG, $v \in V$, $S \in \Sigma^m$ a string, $f_S$ its failure tree and $A_S$ its matching automaton. We can compute $PI_v$ from $PI_u$ for every $u \in N^{-}_{v}$ in time $O\left(w^2\cdot d^{-}_{v}\right)$ or in time $O\left(\left(k_v := \sum_{u \in N^{-}_{v}} |PI_u|\right)^2\right)$, after $O(m)$ preprocessing time.
 \end{lemma}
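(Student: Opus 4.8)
The plan is to prove the characterization $PI_v=\mathrm{leaves}(C)$, where $C:=\{A_S(i,\ell(v))\mid i\in\{0\}\cup\bigcup_{u\in N^-_v}PI_u\}$ and, for a set $X\subseteq\{0,\dots,m\}$, $\mathrm{leaves}(X)$ denotes the set of $x\in X$ having no strict descendant in $X$ in the failure tree $f_S$ (recall that for $i<j$ the node $i$ is an ancestor of $j$ in $f_S$ exactly when $S[1..i]$ is a border of $S[1..j]$, so ``prefix-incomparable'' is the same as ``antichain of $f_S$''). Given this characterization, computing $PI_v$ from the $PI_u$ amounts to computing $\mathrm{leaves}(C)$, which I then carry out within the two claimed time budgets.

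Three structural facts are needed. (i) For every $X$, $\mathrm{leaves}(X)$ is prefix-incomparable and \emph{dominates} $X$, meaning every $i\in X$ is an ancestor of, or equal to, some $j\in\mathrm{leaves}(X)$; hence if $Z\subseteq W$ and $Z$ dominates $W$ then $\mathrm{leaves}(Z)=\mathrm{leaves}(W)$, so in particular $\mathrm{leaves}(B_v)=PI_v$ by the uniqueness in \Cref{def:bv-piv} and $\mathrm{leaves}(\mathrm{leaves}(A)\cup B)=\mathrm{leaves}(A\cup B)$. (ii) Mapping each element of $\mathrm{leaves}(X)$ injectively to one of its descendant leaves of $f_S$ gives $|\mathrm{leaves}(X)|\le w$; in particular $|PI_u|\le w$ for every $u$. (iii) (Commutation of $A_S$ with ancestry.) If $a$ is an ancestor of, or equal to, $b$ in $f_S$, then $A_S(a,c)$ is an ancestor of, or equal to, $A_S(b,c)$ for every $c\in\Sigma$; this follows by unwinding the KMP recursion ($A_S(i,c)=i+1$ when $S[i+1]=c$, and $A_S(i,c)=A_S(f_S(i),c)$ otherwise), writing $A_S(b,c)=b'+1$ where $b'$ is the first node on the $f_S$-ancestor chain of $b$ with $S[b'+1]=c$ (and $0$ if none), observing that the chain of $a$ is a suffix of the chain of $b$, and checking that if the corresponding first node $a'$ of $a$'s chain differs from $b'$ then $a'$ is a proper ancestor of $b'$ with $S[a'+1]=S[b'+1]=c$, whence $a'+1$ is an ancestor of $b'+1$.

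With these in hand I establish the characterization. For $C\subseteq B_v$: in a DAG every walk is a path, so any path $P$ ending in $u\in N^-_v$ with $\ell(P)=S[1..i]$ extends to the path $P\cdot v$ labeled $S[1..i]\cdot\ell(v)$, whose suffix of length $A_S(i,\ell(v))$ is a prefix of $S$ and ends in $v$, so $A_S(i,\ell(v))\in B_v$ (and $i=0$ gives $0$, or the single-vertex occurrence of $S[1]=\ell(v)$). For $C$ dominating $B_v$: the element $0$ is the root of $f_S$ and $C\neq\emptyset$; and for $j\ge1$, deleting the last vertex of an occurrence of $S[1..j]$ ending in $v$ shows $j-1\in\{0\}\cup\bigcup_{u\in N^-_v}B_u$ and $S[j]=\ell(v)$, so $j-1$ is an ancestor of, or equal to, some $i'\in\{0\}\cup\bigcup_u PI_u$ by the defining property of the $PI_u$, and then by (iii) the element $j=A_S(j-1,\ell(v))$ is an ancestor of, or equal to, $A_S(i',\ell(v))\in C$. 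Together with (i) this yields $\mathrm{leaves}(C)=\mathrm{leaves}(B_v)=PI_v$.

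For the implementation I first spend $O(m)$ time computing DFS in/out times of $f_S$, so that testing whether $x$ is a (strict) ancestor of $y$ in $f_S$ takes $O(1)$; then $\mathrm{leaves}(X)$ is computed in $O(|X|^2)$ by discarding every $x\in X$ that has a strict descendant in $X$, and each value $A_S(i,\ell(v))$ costs $O(1)$ via the given automaton. For the $O(k_v^2)$ bound I build $C$ directly ($O(k_v+1)$ lookups, $|C|\le k_v+1$) and output $\mathrm{leaves}(C)$. For the $O(w^2 d^-_v)$ bound I process the in-neighbors one by one, maintaining $R\leftarrow\mathrm{leaves}(R\cup\{A_S(i,\ell(v)):i\in PI_u\})$ starting from $R=\{A_S(0,\ell(v))\}$; by (ii) every set handled has size at most $2w$, so each step costs $O(w^2)$, and by the identity $\mathrm{leaves}(\mathrm{leaves}(A)\cup B)=\mathrm{leaves}(A\cup B)$ the final $R$ equals $\mathrm{leaves}(C)=PI_v$. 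I expect the main obstacle to be fact (iii) together with the two containments $C\subseteq B_v$ and ``$C$ dominates $B_v$''; once those are settled, the size bound $|PI_u|\le w$ and the leaf extraction are routine.
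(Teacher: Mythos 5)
Your proposal is correct and follows essentially the same route as the paper: build the candidate set by applying $A_S(\cdot,\ell(v))$ to the union of the predecessors' $PI_u$ sets, prune it to the deepest elements (an antichain of $f_S$) using $O(1)$ ancestry tests after $O(m)$ preprocessing, and for the $O(w^2 d^-_v)$ bound merge one in-neighbor at a time while keeping the maintained set of size $O(w)$. Your ``leaves/domination'' formalism and the monotonicity fact (iii) for $A_S$ are just a more structured packaging of the paper's inline argument that $j'=A_S(j,S[i])$ dominates $i$ whenever $j$ dominates $i-1$, so the two proofs coincide in substance.
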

 \begin{proof}
  We precompute constant time lowest common ancestor ($LCA$) queries~\cite{aho1976finding} of $f_S$ in $O(m)$ time~\cite{gabow1985linear,schieber1988finding,berkman1993recursive,bender2000lca,bender2005lowest,alstrup2004nearest,fischer2006theoretical}. Note that with this structure we can check whether $i < j$ are prefix-incomparable in constant time ($LCA(i, j) < i$). 
  
  If $v$ is a source we have that either $B_v = PI_v = \{0\}$ if $\ell(v) \neq S[1]$ or $PI_v = \{1\}$ if $\ell(v) = S[1]$, otherwise we proceed as follows. To obtain the $O\left(k_v^2\right)$ time, we first append all the elements of every $PI_u$ for $u\in N^{-}_{v}$ into a list $\mathcal{L}$ (of size $k_v$), then we replace every $i \in \mathcal{L}$ by $A_S(i, \ell(v))$, and finally we check (at most) every pair $i < j$ of elements of $\mathcal{L}$ and test (in constant time) if they are prefix-incomparable, if they are not we remove $i$ from the list. After these $O(|\mathcal{L}|^2) = O(k_v^2)$ tests $\mathcal{L} = PI_v$. 
  
  To obtain the $O\left(w^2\cdot d^{-}_{v}\right)$ time, we process the in-neighbors of $v$ one by one and maintain a prefix-incomparable set representing the prefix matches incoming from the already processed in-neighbors. That is, we maintain a prefix-incomparable set $PI'$, and when we process the next in-neighbor $u \in N^{-}_{v}$ we append all elements of $PI'$ and $\{A_S(i,\ell(v)) \mid i \in PI_u\}$ into a list $\mathcal{L}'$ of size $O(w)$ (by \Cref{lemma:bounded-size-prefix-incomparable}), then we use the same quadratic procedure applied on $\mathcal{L}$ in time $O(w^2)$ to obtain the new $PI'$. After processing all in-neighbors in time $O(d^{-}_{v}\cdot w^2)$, we have $PI' = PI_v$. Next, we show the correctness of both procedures.
 
 Let $PI'_v$ be the result after applying some of the procedures explained before (that is the final state of $PI'$ or $\mathcal{L}$), by construction $PI'_v$ is prefix-incomparable. Now, consider $i \in B_v$ and a path $P$ ending in $v$ with $\ell(P) = S[1..i]$. If $P$ is of length zero (only one vertex), then $i = 1$ and $\ell(P) = \ell(v) = S[1]$. Consider any $u\in N^{-}_{v}$, and any $j \in PI_u$ (there is at least one since $0 \in B_u)$, this value is mapped to $A_S(j, \ell(v)) = A_S(j, S[1]) = j'$. By definition of the matching automaton, $j'$ is the longest border of $S[1..j]\cdot S[1]$, thus $S[1]$ is a border of $S[1..j']$ or $j' = 1$, in both procedures $j'$ can only be removed from $PI'_v$ if a longer prefix contains $S[1..j']$ as a border and also $S[1]$. If $P$ is a proper path and $i > 1$, consider the second to last vertex $u$ of $P$. Note that $i-1 \in B_u$, and thus there is $j\in PI_u$, such that $S[1..i-1]$ is a border of $S[1..j]$, this value is mapped to $j' = A_S(j, \ell(v)) = A_S(j, S[i])$, which is the length of the longest border of $S[1..j]\cdot S[i]$, but since $S[1..i]$ is also a border of $S[1..j]\cdot S[i]$, then $S[1..i]$ is also a border of $S[1..j']$ or $j' = i$. Again, in both procedures $j'$ can only be removed from $PI'_v$ if a longer prefix contains $S[1..j']$ as a border and also $S[1..i]$. Finally, we note that $PI'_v \subseteq B_v$ since every $i \in PI'_v$ corresponds to a match of $S[1..i]$ by construction.
 \end{proof}
 
 We improve the dependency on $w$ and $k_v$ by replacing the quadratic comparison by sorting plus a linear time algorithm on the balanced parenthesis representation~\cite{jacobson1989space,munro2001succinct} of $f_S$.
 
 \begin{restatable}{lemma}{sortingParameterizedVertex}\label{cor:sorting-parameterized-vertex}
 We can obtain \Cref{lemma:parameterized-vertex} in time $O(sort(w, m)\cdot d^{-}_{v})$ or in time $O(sort(k_v, m))$, where $sort(n, p)$ is the time spent by an algorithm sorting $n$ integers in the range $\{0, \ldots, p\}$.
 \end{restatable}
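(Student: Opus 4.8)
The plan is to keep the structure of the proof of \Cref{lemma:parameterized-vertex} intact: we still form the list $\mathcal{L}$ (resp.\ $\mathcal{L}'$ in the $w$-version) by mapping every element $i$ of the incoming sets $PI_u$ through $A_S(i,\ell(v))$, and by the correctness argument already given there the answer $PI_v$ is exactly the set of leaves of $f_S[\mathcal{L}]$, i.e.\ the elements of $\mathcal{L}$ that are not a proper $f_S$-ancestor of any other element of $\mathcal{L}$ (recall that the quadratic test removes $i$ precisely when some $j\in\mathcal{L}$ is a proper descendant of $i$). Only this final $O(|\mathcal{L}|^2)$ pairwise pruning is replaced. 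So it suffices to extract those leaves in $O(sort(|\mathcal{L}|,m)+|\mathcal{L}|)$ time; then $|\mathcal{L}|=O(w)$ (by \Cref{lemma:bounded-size-prefix-incomparable}) repeated over the $d^{-}_{v}$ in-neighbours gives $O(sort(w,m)\cdot d^{-}_{v})$, and $|\mathcal{L}|=k_v$ gives $O(sort(k_v,m))$, on top of the $O(m)$ preprocessing.

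First I would, during the $O(m)$ preprocessing, build the balanced-parenthesis representation of $f_S$~\cite{jacobson1989space,munro2001succinct} and, via one traversal, store for each node $i\in\{0,\ldots,m\}$ the index $o(i)$ of its opening parenthesis and the index $c(i)$ of its closing parenthesis, both integers in $\{1,\ldots,2(m+1)\}$; recall that $i$ is a proper ancestor of $j$ in $f_S$ iff $o(i)<o(j)<c(i)$. The extraction step is then: sort $\mathcal{L}$ by the key $o(\cdot)$ --- sorting $|\mathcal{L}|$ integers in range $O(m)$, hence $sort(|\mathcal{L}|,m)$ time --- and scan the sorted list left to right, marking an element $x$ for deletion exactly when its immediate successor $y$ in the list exists and satisfies $o(y)<c(x)$; finally output the unmarked elements.

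The step needing care is the correctness of this single scan. Since $\mathcal{L}$ is now in preorder, every proper $f_S$-descendant of $x$ that lies in $\mathcal{L}$ has $o$-key in the open interval $(o(x),c(x))$, and these elements form a contiguous run immediately after $x$ in the sorted list; hence $x$ has a proper descendant in $\mathcal{L}$ iff its immediate successor $y$ falls in that run, i.e.\ iff $o(x)<o(y)<c(x)$, which --- as $o(x)<o(y)$ always holds after sorting and $o(y)\neq c(x)$ --- is exactly the tested condition. Thus the unmarked elements are precisely the leaves of $f_S[\mathcal{L}]$, matching \Cref{lemma:parameterized-vertex}; moreover repeated images of $A_S(\cdot,\ell(v))$ become adjacent after sorting and the same rule keeps exactly one copy, so the output is a set. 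The running time is $O(m)$ preprocessing (paid once, since it depends only on $S$ and is shared across all vertices) plus $O(sort(|\mathcal{L}|,m)+|\mathcal{L}|)=O(sort(|\mathcal{L}|,m))$ per invocation, which yields the two stated bounds.
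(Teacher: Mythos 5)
Your proposal is correct and follows essentially the same route as the paper: build the balanced-parenthesis representation of $f_S$ in the $O(m)$ preprocessing, sort the mapped list by opening-parenthesis position, and prune with a single left-to-right scan that compares each element only with its immediate successor (your strict test $o(y)<c(x)$ coincides with the paper's $open[j]\le close[i]$ since parenthesis positions are distinct). Your explicit justification that descendants form a contiguous preorder run—so checking the immediate successor suffices—is the same argument the paper leaves implicit, and the handling of duplicates and of the $w$-version per in-neighbour matches as well.
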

 \begin{proof}
  We compute the balanced parenthesis (BP)~\cite{jacobson1989space,munro2001succinct} representation of the topology of $f_S$, that is, we traverse $f_S$ from the root in preorder, appending an open parenthesis when we first arrive at a vertex, and a closing one when we leave its subtree. As a result we obtain a balanced parenthesis sequence of length $2(m+1)$, where every vertex $i\in f_S$ is mapped to its open parenthesis position $open[i]$ and to its close parenthesis position $close[i]$, which can be computed and stored at preprocessing time. Note that in this representation, $i$ is ancestor of $j$ (and thus prefix-comparable) if and only if $open[i] \le open[j] \le close[i]$. As such, if we have a list of $O(k_v)$ (or $O(w)$) ($\mathcal{L}$ and $\mathcal{L}'$ from \Cref{lemma:parameterized-vertex}) values, we can compute the corresponding prefix-incomparable sets as follows.
  
  First, we sort the list by increasing $open$ value, this can be done in $O(sort(k_v, m))$ (or $O(sort(w, m)$), since this sorting is equivalent to sort by increasing $open/2 \in \{0, \ldots, m\}$ value. Then, we process the list in the sorted order, if two consecutive values $i$ and $j$ in the order are prefix-comparable (that is, if $open[j] \le close[i]$) then we remove $i$ and continue to the next value $j$. At the end of this $O(k_v)$ (or $O(w)$) time processing we obtain the desired prefix-incomparable set.
 \end{proof}

 If we use techniques for integer sorting~\cite{van1975preserving,willard1983log,kirkpatrick1983upper} we can get $O(k_v\log\log{m})$ (or $O(w\log\log{m})$) time for sorting, however introducing $m$ into the running time. We can solve this issue by using more advanced techniques~\cite{han1995conservative,andersson1998sorting,han2002deterministic} obtaining a $O(k_v\log\log{k_v})$ (or $O(w\log\log{w})$) time for sorting. However, we show that by using the suffix-tree~\cite{weiner1973linear,mccreight1976space,ukkonen1995line} of $S^r$ we can obtain a linear dependency on $w$ and $k_v$.
 
 \begin{restatable}{theorem}{linearParameterizedVertex}\label{cor:linear-parameterized-vertex}
 We can obtain the result of \Cref{lemma:parameterized-vertex} in time $O(w\cdot d^{-}_{v})$ or in time $O(k_v)$.
 \end{restatable}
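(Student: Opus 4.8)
The plan is to keep the pipeline of \Cref{cor:sorting-parameterized-vertex} essentially verbatim and replace its only non-linear ingredient --- sorting the working list by preorder in $f_S$ (equivalently, by the $open$-values of the balanced-parenthesis sequence) --- by a procedure that never sorts, because sortedness is kept as an invariant. The tool is the suffix tree $\T$ of $S^r$. Adding the $O(m)$ explicit nodes needed to make every locus explicit, $\T$ still has $O(m)$ nodes and is built in $O(m)$ time; ordering its edges by first character makes the preorder of $\T$ coincide with the lexicographic order of the path-labels. For $i\in\{0,\ldots,m\}$ let $\nu_i$ be the node whose path-label is $(S[1..i])^r$ (the root for $i=0$). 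The structural fact I would use is that $S[1..i]$ is a border of $S[1..j]$ --- i.e. $i$ is a proper ancestor of $j$ in $f_S$ --- if and only if $\nu_i$ is a proper ancestor of $\nu_j$ in $\T$; thus $\T$, with its preorder and its ancestor relation, is a faithful and totally ordered replacement for $f_S$. I would precompute the $\nu_i$'s, the preorder indices and subtree intervals, and constant-time LCA on $\T$, all within the $O(m)$ preprocessing already allowed by \Cref{lemma:parameterized-vertex}.

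First I would store every $PI_u$ as a list in $\T$-preorder. The key structural lemma is: if $Y\subseteq\{0,\ldots,m\}$ is prefix-incomparable and listed in $\T$-preorder and $a\in\Sigma$, then the list $\big(A_S(i,a)\big)_{i\in Y}$ --- and, more importantly, a concatenation of such lists over the in-neighbors of a vertex --- is \emph{sorted in $\T$-preorder up to domination}: whenever an element occurs out of preorder with respect to an earlier element of the list, the two are $\T$-comparable (so one is a proper $\T$-ancestor of the other). This is where the suffix tree earns its keep: $\nu_{A_S(i,a)}$ is obtained from $\nu_i$ by walking up to the deepest $\nu$-ancestor-or-self $\nu_k$ with $S[k+1]=a$ and then following the $a$-Weiner link of $\nu_k$. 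The second step is the operation ``prepend $a$'', which preserves both the ancestor relation and --- since preorder is lexicographic order --- the preorder; the first step, ``walk up to the nearest ancestor with a fixed property'', can only move a node onto one of its ancestors, so by a short LCA-based case analysis it preserves preorder up to collapsing $\T$-comparable pairs. Composing, the whole map is monotone up to domination.

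Granting the key lemma, computing $PI_v$ from the $PI_u$, $u\in N^-_v$, is a single linear scan. For the $O(k_v)$ bound I would form $M$ as the concatenation of the lists $\big(A_S(i,\ell(v))\big)_{i\in PI_u}$ over $u\in N^-_v$ (total length $k_v$), then sweep $M$ left to right maintaining a stack holding the current candidate antichain in $\T$-preorder: for the next element, first pop every stack element that is a $\T$-ancestor of it, then skip it if it is a $\T$-ancestor of (or equal to) the new top, and push it otherwise; each step is $O(1)$ using the precomputed constant-time ancestry/LCA, so the scan is $O(k_v)$, and ``sorted up to domination'' guarantees that the stack at the end is exactly $PI_v$ in preorder. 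For the $O(w\cdot d^-_v)$ bound I would instead process the in-neighbors one at a time, folding each mapped list into a running cleaned preorder list by a linear-time two-way merge followed by one cleanup sweep, each of size $O(w)$ by \Cref{lemma:bounded-size-prefix-incomparable}. The source case and the bookkeeping of \Cref{def:bv-piv} are handled exactly as in \Cref{cor:sorting-parameterized-vertex}, and everything outside the (now removed) sort was already linear there.

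The step I expect to be the main obstacle is the key structural lemma, and in particular upgrading it from a single prefix-incomparable set to the concatenation of the mapped sets over all in-neighbors of $v$: then the two compared elements need not come from the same antichain, so the monotonicity argument must be pushed through LCAs of their preimages' Weiner-walks in $\T$ rather than staying inside one chain. A close second is verifying that the single stack sweep really does output the maximal elements, in preorder, under only the ``sorted-up-to-domination'' hypothesis --- the delicate point being that an incoming element may be a $\T$-ancestor of stack entries that are not currently at the top, so one must argue that the entries popped from the top are exactly the ones that need to disappear.
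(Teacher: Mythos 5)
Your overall route is the paper's route: build the suffix tree $T_r$ of $S^r$ in $O(m)$ time, use the fact that border containment between prefixes of $S$ is exactly the ancestor relation between their loci in $T_r$, keep every $PI_u$ as a list sorted in $T_r$-preorder (lexicographic order of reversed prefixes), show that applying $A_S(\cdot,\ell(v))$ disturbs this order only by creating \emph{dominated} (ancestor--descendant) inversions, and clean up with a linear scan. Your ``sorted up to domination'' lemma for a \emph{single} prefix-incomparable set is true and is exactly the content of the paper's correctness argument (the paper proves it by splitting $PI_u$ into the matching part $M$, where order is preserved outright, and the mismatching part $E$, where any inversion is shown to be a border relation); your Weiner-link phrasing and your stack sweep are cosmetic variants of that. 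The $O(w\cdot d^{-}_{v})$ branch of your proposal (per-in-neighbor two-way merge plus cleanup, each of size $O(w)$ by \Cref{lemma:bounded-size-prefix-incomparable}) is also fine and matches the paper.

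The genuine gap is the step you yourself flag: for the $O(k_v)$ bound you concatenate the mapped lists over \emph{all} in-neighbors and claim the concatenation is still sorted up to domination. That strengthened claim is false: images coming from different in-neighbors are unrelated antichains, so two incomparable values can appear in the concatenation in the wrong preorder (e.g.\ $PI_{u_1}$ contributes only $y$ and $PI_{u_2}$ only $x$ with $x$ preorder-before $y$ and $x,y$ incomparable), and on such input your single stack sweep breaks down --- with a third element $z$ that is a descendant of $x$ but preorder-after $y$, the sweep never pops $x$ (it is buried under $y$) and ends with the comparable pair $\{x,z\}$ on the stack, so the output is neither an antichain nor in preorder, destroying the invariant for later vertices. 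The paper does not need (and does not assert) the cross-neighbor claim: it proves only that a single sorted $PI_u$ can be mapped and cleaned in $O(|PI_u|)$ time, and obtains the corollary by maintaining all $PI$ sets sorted in $T_r$ and combining the per-neighbor results (it is admittedly terse about that combination step, but it never rests on the false concatenation lemma). To repair your write-up you should restrict the key lemma to one $PI_u$ at a time --- which your own argument already establishes --- and then address the combination across in-neighbors separately, rather than by one global sweep over the concatenation.
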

 \begin{proof}
 We reuse the procedure of \Cref{cor:sorting-parameterized-vertex} but this time on the BP representation of the topology of the suffix-tree $T_r$ of $S^r$, which has $O(m)$ vertices and can be built in $O(m)$ time~\cite{weiner1973linear,mccreight1976space,ukkonen1995line}. Note that every suffix represented in $T_r$ corresponds to a prefix of $S$ (spelled in the reverse direction). Moreover, $i \le j$ are prefix-comparable if and only if the vertex representing $i$ in $T_r$ ($T_r[i]$) is a ancestor of $T_r[j]$, the same property as in $f_S$. Furthermore, if $B$ is prefix-incomparable and $A(j, a) = j+1$ for every $j\in B$, then the positions of the vertices in $A(B, a)$ in $T_r$ follow the same order as the ones in $B$, since the suffix-tree is lexicographically sorted.
 
 Now, we show how to obtain the prefix-incomparable set representing $A(PI_u, \ell(v))$ in $|PI_u|$ time assuming that $PI_u$ is sorted by increasing ($open$) position in $T_r$.
 
 We first separate $PI_u$ into the elements $i \in M$  with $S[i+1] = \ell(v)$ and $i \in E$ with $S[i+1] \ne \ell(v)$ (in the same relative order as in $PI_u$, which is supposed to be in increasing order). Since $M$ is prefix-incomparable the positions of the vertices in $A(M, \ell(v))$ in $T_r$ follow the same order as the ones in $M$. We then obtain the list $E_u$ by applying $T_r[A(i, \ell(v)) - 1]$ for every $i \in E$ (if $A(i, \ell(v)) = 0$ we do not process $i$), and then for any pair of consecutive elements $x$ before $y$ in $E_u$ such that $y\le x$ we remove $y$ from $E_u$, and repeat this until no further such inversion remains, thus obtaining an increasing list in $E_u$ representing vertices in $T_r$. Next, since $E_u$ is sorted we can obtain the list $PI_E$ of prefix-incomparable elements representing $E_u$, and finally apply $A(PI_E, \ell(v))$ (which also follows an increasing order in $T_r$), merge it with $A(M, \ell(v))$, and compute the prefix-comparable elements of this merge.
 
 The correctness of the previous procedure follows by the fact that if there is an inversion $y < x$ in $E_u$, then the prefix $A(j,\ell(v))-1$ represented by $y$ in $T_r$ is a border of the prefix $A(i,\ell(v))-1$ represented by $x$ (and thus is safe to remove $y$). For this, first note that $i$ appears before $j$ in $E$, then $S[i..1] <_{lex} S[j..1]$, and since $i$ is prefix-incomparable with $j$ there is a $k\ge 1$ such that $S[i..i+k-1] = S[j..j+k-1]$ and $S[i+k] <_{lex} S[j+k]$. Then, since $y$  appears before $x$ in $E_u$, then $S[A(j,\ell(v))-1..1] <_{lex} S[A(i,\ell(v))-1..1]$, but since $A(i,\ell(v))-1$ is a border of $i$ and $A(j,\ell(v))-1$ is a border of $j$, $S[A(j,\ell(v))-1..1]$ must be a prefix of $S[A(i,\ell(v))-1..1]$, and thus $S[1..A(j,\ell(v))-1]$ is a border of $S[1..A(j,\ell(v))-1]$.
 
 The corollary is obtained by maintaining the $PI_v$ sets sorted by position in $T_r$, and noting that the previous procedure runs in linear $O(|P_u|)$ time.
 \end{proof}

 In \Cref{sec:the-pattern} we show how to use \Cref{cor:linear-parameterized-vertex} to derive a parameterized algorithm using parameter $w = |\{i \in \{0,\ldots, m\} \mid \not \exists j, f_S(j) = i\}|$, improving on the classical DAG algorithm when $S$ is a periodic string. Next, we will present our results on recognizing funnels and their generalization (\Cref{sec:funnels-and-beyond}), and how to use these classes of graphs and \Cref{cor:linear-parameterized-vertex} to obtain parameterized algorithms using parameters related to the topology of the DAG (\Cref{sec:the-dag}).
 
 \section{Funnels and beyond}\label{sec:funnels-and-beyond}
 
 Recall that funnels are DAGs whose source-to-sink paths have at least one \emph{private} edge\footnote{For the sake of simplicity, we assume that there are no isolated vertices, thus any source-to-sink path has at least one edge.}, that is, an edge used by only one source-to-sink path. More formally, 
 
 \begin{definition}[Private edge]
 Let $G = (V, E)$ be a DAG and $\paths$ the set of source-to-sink paths of $G$. We say that $e\in E$ is \emph{private} if $\mu(e) := |\{P \in \paths \mid e \in P\}| = 1$. If $\mu(e) > 1$, we say that $e$ is \emph{shared}.
 \end{definition}

 \begin{definition}[Funnel]\label{def:funnel}
 Let $G = (V, E)$ be a DAG and $\paths$ the set of source-to-sink paths of $G$. We say that $G$ is a funnel if for every $P \in \paths$ there exists $e \in P$ such that $e$ is private.
 \end{definition}
 
 Millani et al.~\cite{millani2020efficient} showed two other characterizations of funnels.
 \begin{theorem}[\cite{millani2020efficient}]\label{thm:funnels-characterizations}
 Let $G = (V, E)$ be a DAG. The following are equivalent:
 \begin{enumerate}
     \item $G$ is a funnel
     \item There exists a partition $V = V_1 \dot\cup V_2$ such that $G[V_1]$ is an out-forest, $G[V_2]$ is an in-forest and there are no edges from $V_2$ to $V_1$
     \item There is no path $P$ such that its first vertex has more than one in-neighbor (a merging vertex) and its last vertex more than one out-neighbor (a forking vertex). Such a path is called forbidden
 \end{enumerate}
 \end{theorem}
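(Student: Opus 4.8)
The plan is to prove the three conditions equivalent by establishing $1\Leftrightarrow 3$ directly and $2\Leftrightarrow 3$, the common engine being a description of private edges via path counts. Write $\mu_s(v)$ for the number of source-to-$v$ paths and $\mu_t(v)$ for the number of $v$-to-sink paths (each is at least $1$ in a finite DAG, counting the trivial one-vertex path). First I would record three elementary facts. \emph{(i)} Every source-to-sink path through an edge $e=(u,v)$ splits uniquely as a source-to-$u$ path, then $e$, then a $v$-to-sink path, and since in a DAG the ancestors of $u$ are disjoint from the descendants of $v$, this gives $\mu(e)=\mu_s(u)\cdot\mu_t(v)$; hence $e$ is private iff $\mu_s(u)=\mu_t(v)=1$. \emph{(ii)} Walking backwards along in-edges shows $\mu_s(v)\ge 2$ iff some merging vertex reaches $v$, and dually $\mu_t(v)\ge 2$ iff $v$ reaches some forking vertex (with the convention that a vertex reaches itself, so the case where $v$ is itself merging or forking is covered). \emph{(iii)} Along any source-to-sink path $v_1,\dots,v_\ell$, the sequence $\mu_s(v_i)$ is non-decreasing and $\mu_t(v_i)$ is non-increasing, since $\mu_s(v_i)=\sum_{w\in N^-_{v_i}}\mu_s(w)\ge\mu_s(v_{i-1})$ and symmetrically for $\mu_t$.

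For $3\Rightarrow 1$ I would take an arbitrary source-to-sink path $P=v_1,\dots,v_\ell$, which has $\ell\ge 2$ by the standing no-isolated-vertices assumption, so $\mu_s(v_1)=1$ and $\mu_t(v_\ell)=1$. No vertex $v_i$ can satisfy both $\mu_s(v_i)\ge 2$ and $\mu_t(v_i)\ge 2$: by \emph{(ii)} we could then concatenate a path from a merging vertex to $v_i$ with a path from $v_i$ to a forking vertex, and in a DAG such a walk is already a path, hence a forbidden one. Letting $a$ be the largest index with $\mu_s(v_a)=1$ and $b$ the smallest with $\mu_t(v_b)=1$, \emph{(iii)} together with the previous observation forces $b\le a+1$; hence $P$ contains an edge $(v_i,v_{i+1})$ with $i\le a$ and $i+1\ge b$, i.e.\ with $\mu_s(v_i)=\mu_t(v_{i+1})=1$, which is private by \emph{(i)}. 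For $1\Rightarrow 3$ I would argue the contrapositive: given a forbidden path from a merging vertex $m$ to a forking vertex $f$, extend it backwards to a source and forwards to a sink to obtain a source-to-sink path $P$ that traverses $m,\dots,f$. For every edge $(u,v)$ of $P$, either $u$ occurs at or after $m$ on $P$, so $\mu_s(u)\ge 2$, or else $v$ occurs at or before $f$, so $\mu_t(v)\ge 2$; in both cases $(u,v)$ is shared by \emph{(i)}, so $P$ has no private edge and $G$ is not a funnel.

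For $2\Leftrightarrow 3$ I would use the partition $V_1=\{v:\mu_s(v)=1\}$ — equivalently, by \emph{(ii)}, the vertices not reached by any merging vertex — and $V_2=V\setminus V_1$. Assuming $3$: there is no edge from $V_2$ to $V_1$, since it would make a merging vertex reach a $V_1$-vertex; $G[V_1]$ is an out-forest because each $v\in V_1$ has $d^-_v\le 1$; and $G[V_2]$ is an in-forest because, with no edge leaving $V_2$ into $V_1$, the out-degree of any $v\in V_2$ inside $G[V_2]$ equals $d^+_v$, and $d^+_v\ge 2$ would make $v$ a forking vertex reached by a merging vertex, i.e.\ yield a forbidden path. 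Conversely, given any partition as in $2$ together with a forbidden path $v_1,\dots,v_\ell$: $v_1$ cannot lie in $V_1$, because $G[V_1]$ being an out-forest would force one of its $\ge 2$ in-neighbors into $V_2$, contradicting the absence of $V_2$-to-$V_1$ edges; so $v_1\in V_2$, and since the path cannot leave $V_2$ we get $v_\ell\in V_2$ with out-degree $\ge 2$ entirely inside $V_2$, contradicting that $G[V_2]$ is an in-forest.

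I expect the real work to be purely in the bookkeeping rather than in any single deep step: proving \emph{(i)}--\emph{(iii)} with the right conventions (sources and sinks counting their trivial path, a vertex reaching itself), and verifying that the degenerate situations — a single vertex that is simultaneously merging and forking, the edges incident to the endpoints $m$ and $f$ of a forbidden path, and the boundary index cases in the $b\le a+1$ argument — are all absorbed by the general argument instead of requiring separate treatment.
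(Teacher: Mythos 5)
Your proof is correct, but note that the paper itself does not prove this statement at all: it is imported verbatim from Millani et~al.~\cite{millani2020efficient}, so there is no in-paper argument to compare against. Your route is a clean, self-contained derivation that happens to run on exactly the machinery the paper develops for other purposes: your fact \emph{(i)} is the identity $\mu(e)=\mu_s(u)\cdot\mu_t(v)$ of \Cref{eq:s2s-counting}, your fact \emph{(iii)} (monotonicity of $\mu_s$ and $\mu_t$ along a path) is the same observation used inside the proof of \Cref{lemma:subset-stk}, and your characterization $V_1=\{v:\mu_s(v)=1\}$ is the $k=1$ case of the partition constructed in \Cref{thm:stk-partitioning}; in that sense your argument also explains why $\ST_1$ coincides with funnels, which the paper asserts without proof. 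The delicate points are all handled correctly: you rely on the paper's no-isolated-vertices convention to get $\ell\ge 2$ in $3\Rightarrow 1$, you allow single-vertex forbidden paths (a vertex that is simultaneously merging and forking), and the concatenation steps producing paths rather than mere walks are justified by acyclicity. One could shorten $3\Rightarrow 1$ slightly by picking the edge $(v_a,v_{a+1})$ directly (its head has $\mu_s=1$ by choice of $a$, and its tail must have $\mu_t=1$ since otherwise $v_{a+1}$ would have both counts $\ge 2$, except in the boundary case $a=\ell$ which cannot occur for a sink endpoint), but your index bookkeeping with $b\le a+1$ is equally valid.
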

 
They also gave a $O(|V|+|E|)$ time algorithm to recognize whether a DAG $G$ is a funnel, and a $O(|V|(|V|+|E|))$ time algorithm to find a minimal forbidden path in a general graph, that is, a forbidden path that is not contained in another forbidden path.
 
 \subsection{Three (new) linear time recognition algorithms}\label{sec:three-recognition}

 We first show how to find a minimal forbidden path in time $O(|V|+|E|)$ in general graphs, improving on the quadratic algorithm of Millani et al.~\cite{millani2020efficient}.
 
 \begin{restatable}{lemma}{unitig}\label{lemma:unitig}
 Let $G = (V, E)$ be a graph. In $O(|V|+|E|)$ time, we can decide if $G$ contains a forbidden path, and if one exists we report a minimal forbidden path.
 \end{restatable}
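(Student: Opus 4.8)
The plan is to pin down the structure of a minimal forbidden path and then search for one directly. Recall from item~3 of \Cref{thm:funnels-characterizations} that the only constraints on a forbidden path $P = v_1,\ldots,v_k$ are that $v_1$ is a \emph{merging} vertex ($d^{-}_{v_1} > 1$) and $v_k$ is a \emph{forking} vertex ($d^{+}_{v_k} > 1$); the interior vertices are arbitrary. Hence, if some $v_i$ with $i \ge 2$ is merging, then $v_i,\ldots,v_k$ is a strictly shorter forbidden path, and symmetrically if some $v_i$ with $i \le k-1$ is forking; iterating, every forbidden path properly contains (or equals) one that is inclusion-minimal. I would first prove that a minimal forbidden path (one not properly containing another forbidden path) is therefore either (a) a single vertex that is both merging and forking, or (b) a path $v_1,\ldots,v_k$ with $k \ge 2$ whose first vertex is merging but not forking, whose last vertex is forking but not merging, and whose interior vertices $v_2,\ldots,v_{k-1}$ are all \emph{non-branching}, i.e.\ have $d^{-} \le 1$ and $d^{+} \le 1$. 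In case (b) each interior $v_i$ has exactly one in-neighbor (namely $v_{i-1}$) and one out-neighbor (namely $v_{i+1}$), so the whole path is forced once its two endpoints are fixed.

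The algorithm is then: in $O(|V|+|E|)$ time compute all in/out-degrees and tag every vertex as merging, forking, and/or non-branching. If some vertex is both merging and forking, report that single vertex (case (a)). Otherwise, iterate over every forking vertex $b$ and every in-neighbor $u$ of $b$: if $u$ is merging, report the path $u, b$ (case (b) with $k=2$); if $u$ is non-branching, walk the forced backward chain $u = w_0, w_1, w_2, \ldots$, where $w_{j+1}$ is the unique in-neighbor of $w_j$, stopping at the first vertex $w_t$ that is branching or a source, and if $w_t$ is merging report $w_t, w_{t-1}, \ldots, w_0, b$. If nothing is reported, declare that $G$ has no forbidden path. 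Correctness of the reporting is immediate from the characterization: given a minimal forbidden path, case (a) is caught by the first check, and in case (b) its forking endpoint $v_k$ is processed, its in-neighbor $v_{k-1}$ being either merging (then $k=2$ and $v_{k-1}=v_1$ is reported) or non-branching (then the forced backward chain from $v_{k-1}$ is exactly $v_{k-1}, v_{k-2}, \ldots, v_1$ and halts at the merging vertex $v_1$). Correctness of the negative answer follows from the shrinking argument: any forbidden path would yield a minimal one of type (a) or (b), which the algorithm would have detected.

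For the running time the two facts to check are that $\sum_{b\ \mathrm{forking}} d^{-}_{b} = O(|E|)$ and that the backward chains explored from distinct non-branching in-neighbors of forking vertices are pairwise vertex-disjoint, so that with a visited flag all chain-walking costs $O(|V|)$ in total. Disjointness holds because a non-branching vertex has at most one out-neighbor, hence a unique forced forward path, hence lies on the backward chain of at most one forking vertex. Equivalently, one may precompute the unitig decomposition of $G$ (the maximal non-branching paths, each annotated with the branching vertex, source, or sink flanking either end) in $O(|V|+|E|)$ time and then test each unitig's flanks in $O(1)$; the lemma's name hints that this is the intended view.

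The step I expect to be the main obstacle is justifying, for a general (not necessarily acyclic) graph, that the backward-chain walk always terminates. If it revisited a vertex, say $w_i = w_j$ with $i < j$, then $w_i, w_{i+1}, \ldots, w_{j-1}$ would form a directed cycle all of whose vertices are non-branching, which forces the unique out-neighbor of $w_i$ to lie on that cycle; but that out-neighbor is $w_{i-1}$ (or, when $i = 0$, the original forking vertex $b$), which does not lie on the cycle (respectively, is not non-branching) --- a contradiction. Thus every chain halts at a branching vertex or a source, and the overall running time is $O(|V|+|E|)$.
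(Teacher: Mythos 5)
Your proposal is correct and follows essentially the same route as the paper: both reduce the problem to checking single vertices that are merging and forking, merging-to-forking edges, and maximal non-branching chains (unitigs) whose flanking vertices are merging resp.\ forking, all computable from in/out-degrees in $O(|V|+|E|)$ time. The only cosmetic difference is that you anchor the chain walk at forking vertices and walk backward (supplying the termination and disjointness details the paper leaves implicit), whereas the paper extends each unit-degree vertex in both directions.
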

 \begin{proof}
 In the bioinformatics community minimal forbidden paths are a subset of \emph{unitigs} and it is well known how to compute them in $O(|V|+|E|)$ time (see e.g.~\cite{kececioglu1995combinatorial,jackson2009parallel,kingsford2010assembly,medvedev2007computability}), here we include a simple algorithm for completeness. We first compute the indegree and outdegree of each vertex and check whether exists a forbidden path of length zero or one, all in $O(|V|+|E|)$ time, in the process we also mark all vertices except the ones with unit indegree and outdegree. If no path is found we iterate over the vertices one last time. If the current vertex is not marked we extend it back and forth as long as there is a unique extension and mark the vertices in this extension, finally we check whether the first vertex is merging and the last forking. This last iteration takes $O(|V|)$ time.
 \end{proof}

 \Cref{lemma:unitig} provides our first linear time recognition algorithm and, as opposed to the algorithm of Millani et al.~\cite{millani2020efficient}, it also reports a minimal forbidden path given a general graph. Moreover, in \Cref{sec:linear-distance}, we show that \Cref{lemma:unitig} provides a linear time parameterized algorithm for the NP-hard (and inapproximable) problem of deletion distance of a general graph to a funnel~\cite{lund1993approximation,millani2020efficient}. Millani et al.~\cite{millani2020efficient} solved this problem in (parameterized) quadratic time and in (parameterized) linear time only if the input graph is a DAG.
 
 Next, we show another linear time recognition algorithm, which additionally finds the partition $V = V_{1} \dot\cup V_{2}$ from \Cref{thm:funnels-characterizations}. Finding such a partition will be essential for our solution to SMLG. From now we will assume that the input graph is a DAG since this condition can be checked in linear time~\cite{kahn1962topological,tarjan1976edge}.
 
 \begin{lemma}
 Let $G = (V, E)$ be a DAG. We can decide in $O(|V|+|E|)$ time whether $G$ is a funnel. Additionally, if $G$ is a funnel, the algorithm reports a partition $V = V_{1} \dot\cup V_{2}$ such that $G[V_1]$ is an out-forest, $G[V_2]$ is an in-forest and there are no edges from $V_2$ to $V_1$.
 \end{lemma}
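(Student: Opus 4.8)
The plan is to reduce to the vertex-partition characterization (item~2 of \Cref{thm:funnels-characterizations}) and to observe that, whenever a valid partition exists, it is essentially forced. First I would compute all in- and out-degrees in $O(|V|+|E|)$ time and let $M = \{v : d^{-}_{v}\ge 2\}$ be the set of \emph{merging} vertices and $F = \{v : d^{+}_{v}\ge 2\}$ the set of \emph{forking} vertices. The key structural claim is: in any partition $V = V_1 \dot\cup V_2$ witnessing item~2, every merging vertex lies in $V_2$, every forking vertex lies in $V_1$, and $V_2$ is closed under forward reachability. The last point is just a restatement of ``no edges from $V_2$ to $V_1$''. For the first point, if a merging vertex $v$ were in $V_1$, then, since no edge goes from $V_2$ to $V_1$, all in-neighbors of $v$ would also be in $V_1$, so $v$ would have in-degree $\ge 2$ in $G[V_1]$, contradicting that $G[V_1]$ is an out-forest; the argument for forking vertices and $G[V_2]$ is symmetric.

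Guided by this, I would define the candidate partition directly: let $V_2$ be the set of all vertices reachable from some vertex of $M$ (each merging vertex reaches itself, so $M \subseteq V_2$), computed by a single multi-source forward traversal (BFS or DFS) in $O(|V|+|E|)$ time, and set $V_1 = V \setminus V_2$. The algorithm then checks whether $V_2 \cap F = \emptyset$; if so it outputs $(V_1, V_2)$, and otherwise reports that $G$ is not a funnel.

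For correctness I would argue both directions. If $V_2 \cap F \ne \emptyset$, pick $f \in F$ reachable from some $m \in M$; a path from $m$ to $f$ (a single vertex if $m = f$) has a merging first vertex and a forking last vertex, hence is a forbidden path, so by item~3 of \Cref{thm:funnels-characterizations} $G$ is not a funnel. Conversely, if $V_2 \cap F = \emptyset$: $V_2$ is forward-closed by construction, so there are no edges from $V_2$ to $V_1$; every vertex of $V_2$ has out-degree $\le 1$ (otherwise it would be a forking vertex in $V_2$), so $G[V_2]$ is an in-forest; and every vertex of $V_1$ has in-degree $\le 1$ (otherwise it would be a merging vertex, but $M \subseteq V_2$), so $G[V_1]$ is an out-forest. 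By item~2 of \Cref{thm:funnels-characterizations}, $G$ is a funnel and $(V_1, V_2)$ is the required partition. The total running time is that of a degree computation plus one graph traversal, i.e. $O(|V|+|E|)$.

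I expect the only delicate point to be the bookkeeping that pins merging and forking vertices to the correct sides, together with the observation that forward-closure of $V_2$ is exactly the ``no edges from $V_2$ to $V_1$'' condition; once that is isolated, the correspondence with the forbidden-path characterization makes the rest of the proof routine.
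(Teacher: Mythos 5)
Your proof is correct, and it arrives at the same partition as the paper by a complementary construction. The paper grows $V_1$ directly: a BFS from the sources that only enqueues vertices of indegree at most one, after which it must verify two conditions on the complement (no edges from $V_2$ to $V_1$, and outdegree at most one inside $V_2$), using the forbidden-path characterization only in the completeness direction. You instead grow $V_2$ as the forward-reachable set of the merging vertices, which makes forward-closure (hence the ``no edges from $V_2$ to $V_1$'' condition) and the out-forest property of $G[V_1]$ hold by construction, leaving a single check, $V_2 \cap F = \emptyset$; both directions of your correctness argument then lean explicitly on items~2 and~3 of \Cref{thm:funnels-characterizations}, and your opening observation that any valid partition must place merging vertices in $V_2$ and be forward-closed shows your $V_2$ is the minimal admissible one (in fact it coincides with the complement of the paper's $V_1$). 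What your route buys is a leaner verification step and a cleaner ``the partition is forced'' explanation; what the paper's buys is that it works directly from the source side without introducing the sets $M$ and $F$. Both are $O(|V|+|E|)$, and the one point worth being explicit about in your write-up is that a single vertex that is both merging and forking counts as a forbidden path (the paper's definition allows length-zero forbidden paths), which is exactly the case $m=f$ you already mention.
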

 \begin{proof}
 We start a special BFS traversal from all the source vertices of $G$. The traversal only adds vertices to the BFS queue if they have not been previously visited (as a typical BFS traversal) and if its indegree is at most one. After the search we define the partition $V_1$ as the set of vertices visited during the traversal and $V_2 = V\setminus V_1$. Finally, we report the previous partition if there are no edges from $V_2$ to $V_1$, and if every vertex of $V_2$ has outdegree at most one. All these steps run in time $O(|V|+|E|)$. 
 
 Note that if the algorithm reports a partition, then this satisfies the required conditions to be a funnel ($G[V_{1}]$ is an out-forest since every vertex visited in the traversal has indegree at most one). Moreover, if $G$ is a funnel, we prove that $V_{2}$ is an in-forest and that there are no edges from $V_{2}$ to $V_{1}$. For the first, suppose by contradiction that there is a vertex $v \in V_{2}$ with $d^{+}_{v} > 1$, since every vertex is reached by some source in a DAG then there is a $u \in V_{2}$ with $d^{-}_{u} > 1$ (a vertex that was not added to the BFS queue) that reaches $v$, implying the existence of a forbidden path in $G$, a contradiction. Finally, there cannot be edges from $V_{2}$ to $V_{1}$ since the indegree (in $G$) of vertices of $V_1$ is at most one and its unique (if any) in-neighbor is also in $V_{1}$ by construction.
 \end{proof}
  
  Next, we present another characterization of funnels based on the structure of private/shared edges of the graph, which can be easily obtained by manipulating the original~\Cref{def:funnel}.
  
  \begin{definition}[Funnel]
  Let $G = (V, E)$ be a DAG. We say that $G$ is a funnel if there is no source-to-sink path using only shared edges.
  \end{definition}
  
  As such, another approach to decide whether a DAG $G$ is a funnel is to compute $\mu(e)$ for every $e\in E$ and then perform a traversal that only uses shared edges. Computing the number of source-to-sink paths containing $e$, that is $\mu(e)$, can be done by multiplying the number of source-to-$e$ paths, $\mu_s(e)$, by the number of $e$-to-sink paths, $\mu_t(e)$, each of which can be computed in $O(|V|+|E|)$ time for all edges. The solution consists of a dynamic programming on a topological order (and reverse topological order) of $G$ with the following recurrences.
  \begin{equation}\label{eq:s2s-counting}
      \begin{aligned}
        \mu_s(e = (u, v)) &= \mu_s(u) = \mathbbm{1}_{d^{-}_{u} = 0} +  \sum_{u' \in N^{-}_{u}} \mu_s((u', u))\\
        \mu_t(e = (u, v)) &= \mu_t(v) = \mathbbm{1}_{d^{+}_{v}=0} +\sum_{v' \in N^{+}_{v}} \mu_t((v, v'))\\\
        \mu(e) &= \mu_s(e)\cdot\mu_t(e)
      \end{aligned}
  \end{equation}
    Where $\mathbbm{1}_A$ is the characteristic function evaluating to $1$ if $A$ is true and to $0$ otherwise. 
    We note that solving the dynamic programs of \Cref{eq:s2s-counting} would take $\Omega(|V||E|)$, since the equations are written in terms of edges. It is simple to see that for every $e = (u, v) \in E, \mu_s(e) = \mu_s(u) \land \mu_t(e) = \mu_t(v)$, thus one can compute the dynamic programs in terms of vertices in $O(|V|+|E|)$ time. By simplicity, we will use this observation implicitly in \Cref{thm:k-funnel-linear-recognition,lemma:linear-min-k}.
    The previous algorithm assumes constant time arithmetic operations on numbers up to $\max_{e\in E} \mu(e)$, which can be $O(2^{|V|})$. To avoid this issue, we note that it is not necessary to compute $\mu(e)$, but only to verify that $\mu(e) > 1$. As such, we can recognize shared edges as soon as we identify that $\mu(e) > 1$, that is whenever $\mu_s(e)$ or $\mu_t(e)$ is greater than one in their respective computation. A formal description of this algorithm can be found in \Cref{thm:k-funnel-linear-recognition}.
 
 \subsection{Generalizations of funnels}
 
 To generalize funnels we will allow source-to-sink paths to use only shared edges, but require to have at least one edge shared by at most $k$ different source-to-sink paths.
 
 \begin{definition}[$k$-private edge]\label{def:kprivate}
 Let $G = (V, E)$ be a DAG. We say that $e \in E$ is $k$-private if $\mu(e) \le k$. If $\mu(e) > k$ we say that $e$ is $k$-shared.
 \end{definition}
 
 \begin{definition}[$k$-funnel]\label{def:kfunnel}
 Let $G = (V, E)$ be a DAG. We say that $G$ is a $k$-funnel if there is no source-to-sink path using only $k$-shared edges.
 \end{definition}
 
 The next algorithm is a generalization of the last algorithm in \Cref{sec:three-recognition} to decide if a DAG is a $k$-funnel. It assumes constant time arithmetic operations on numbers up to $k$.
 
 \begin{lemma}\label{thm:k-funnel-linear-recognition}
 We can decide if a DAG $G = (V, E)$ is a $k$-funnel in $O(|V|+|E|)$ time.
 \end{lemma}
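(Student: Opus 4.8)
The plan is to work directly from the characterization in \Cref{def:kfunnel}: $G$ is a $k$-funnel if and only if $G$ has no source-to-sink path all of whose edges are $k$-shared. So the algorithm has two phases: (i) mark every $k$-shared edge, and (ii) test whether the subgraph consisting of the $k$-shared edges contains a path from a source of $G$ to a sink of $G$.

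For phase (i), recall from \Cref{eq:s2s-counting} that an edge $e = (u,v)$ satisfies $\mu(e) = \mu_s(u)\cdot\mu_t(v)$, and that $\mu_s(\cdot)$ and $\mu_t(\cdot)$ can each be computed for all vertices by a single dynamic program over a (reverse) topological order in $O(|V|+|E|)$ time. The only catch is that these counts may be as large as $2^{\Theta(|V|)}$, violating the assumption of constant-time arithmetic on numbers bounded by $k$. To fix this, I would compute the truncated quantities $\tilde\mu_s(v) = \min(\mu_s(v), k+1)$ and $\tilde\mu_t(v) = \min(\mu_t(v), k+1)$, capping the running value after each addition in the recurrence so that every intermediate number stays in $\{0, \ldots, k+1\}$. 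Since every vertex lies on at least one source-to-sink path (no isolated vertices), $\mu_s(u), \mu_t(v) \ge 1$; hence if $\tilde\mu_s(u) = k+1$ or $\tilde\mu_t(v) = k+1$ then $\mu(e) > k$ and $e$ is $k$-shared, and otherwise $\tilde\mu_s(u) = \mu_s(u) \le k$ and $\tilde\mu_t(v) = \mu_t(v) \le k$, so it remains to decide whether $\mu_s(u)\cdot\mu_t(v) > k$. To avoid forming a product of size up to $k^2$, note that for positive integers $a, b$ one has $ab > k$ iff $b \ge \lfloor k/a\rfloor + 1$, so a single integer division on operands in $\{0,\ldots,k\}$ decides $k$-sharedness of each edge.

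For phase (ii), let $E' \subseteq E$ be the set of $k$-shared edges. I would run a dynamic program over a reverse topological order of $G$ that marks a vertex $v$ as \emph{good} if $v$ is a sink of $G$, or if some out-neighbor $w$ with $(v,w) \in E'$ is good. Unrolling such a marking yields a path in $(V, E')$ from $v$ to a sink of $G$, and conversely any such path makes $v$ good; since $E' \subseteq E$ this is also a path of $G$, and a source of $G$ is never a sink (no isolated vertices). Therefore $G$ is \emph{not} a $k$-funnel precisely when some source of $G$ is marked good. Computing a topological order, the two counting dynamic programs with capping, the per-edge division test, the goodness dynamic program, and the final scan over the sources all take $O(|V|+|E|)$ time, giving the claimed bound.

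The main obstacle is the arithmetic bookkeeping of phase (i): one has to argue carefully that truncating the path counts at $k+1$ loses no information for deciding the predicate ``$\mu(e) > k$'', and that the division reformulation keeps every operand within $\{0,\ldots,k\}$, so the stated model of constant-time arithmetic on numbers up to $k$ is respected. Everything else reduces to standard linear-time graph traversals.
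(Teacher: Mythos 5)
Your proof is correct and takes essentially the same approach as the paper's: compute $\mu_s$ and $\mu_t$ via the dynamic program of \Cref{eq:s2s-counting} with values capped once they exceed $k$, classify each edge as $k$-shared from these capped counts, and then check in linear time whether the $k$-shared edges alone admit a source-to-sink path. Your only refinements are the explicit $\min(\cdot,\,k+1)$ truncation argument and the integer-division trick that keeps the per-edge product test on operands of size at most $k$ (the paper simply multiplies the two capped values), neither of which changes the underlying algorithm.
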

 \begin{proof}
 We process the vertices in a topological ordering and use \Cref{eq:s2s-counting} to compute $\mu_s(e)$ in one pass, $\mu_t(e)$ in another pass and $\mu(e)$ in a final pass. To avoid arithmetic operations with numbers greater than $k$, we mark the edges having $\mu_s$ and $\mu_t$ greater than $k$ as $k$-shared during the computations of $\mu_s, \mu_t$. Note that if $\mu_s(e) > k$ or $\mu_t(e)> k$ then $\mu(e) > k$. As such, before computing $\mu_s(e)$ ($\mu_t(e)$) we check if some of the edges from (to) the in(out)-neighbors is marked as $k$-shared. If that is the case we do not compute $\mu_s(e)$ ($\mu_t(e)$) and instead mark $e$ as $k$-shared, otherwise we compute the respective sum of \Cref{eq:s2s-counting}, and if at some point the cumulative sum exceeds $k$ we stop the computation and mark $e$ as $k$-shared. Finally, we find all $k$-shared edges as the marked plus the unmarked with $\mu(e) = \mu_s(e)\cdot\mu_t(e) > k$, perform a traversal only using $k$-shared edges, and report that $G$ is not a $k$-funnel if there is a source-to-sink path using only $k$-shared edges in time $O(|V|+|E|)$.
 \end{proof}
 
 We can use the previous result and exponential search~\cite{bentley1976almost,baeza2010fast} to find the minimum $k$ such that a DAG is a $k$-funnel. Since the exponential search can overshoot $k$ at most by a factor of $2$, this results assumes constant time arithmetic operations on numbers up to $2k$.
 
 \begin{corollary}\label{cor:exponential-min-k}
 Let $G = (V, E)$ be a DAG. We can find the minimum $k$ such that $G$ is a $k$-funnel in $O((|V|+|E|)\log{k})$ time.
 \end{corollary}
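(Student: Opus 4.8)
The plan is to reduce the problem to repeatedly calling the linear-time $k$-funnel recognition procedure of \Cref{thm:k-funnel-linear-recognition} inside an exponential (doubling) search followed by a binary search. The observation that makes such searches correct is that the property ``$G$ is a $k$-funnel'' is monotone in $k$: if $G$ is a $k$-funnel and $k' \ge k$, then $G$ is also a $k'$-funnel. Indeed, for $k' \ge k$ every $k'$-shared edge (an edge $e$ with $\mu(e) > k'$) is in particular $k$-shared, so the set of $k'$-shared edges is contained in the set of $k$-shared edges; hence any source-to-sink path consisting only of $k'$-shared edges would also consist only of $k$-shared edges, and the nonexistence of the latter implies the nonexistence of the former.

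I would then note that the minimum is well defined and bounded. We may assume $E \ne \emptyset$ (otherwise there is no source-to-sink path and $G$ is vacuously a $0$-funnel). Then every edge lies on at least one source-to-sink path, so every edge is $0$-shared and $G$ is not a $0$-funnel; on the other hand, for $k_{\max} := |\paths|$ every edge is $k_{\max}$-private, so there are no $k_{\max}$-shared edges and $G$ is a $k_{\max}$-funnel. Hence the minimum $k^*$ exists and satisfies $1 \le k^* \le |\paths|$; in particular $\log k^* = O(|V|)$, so the search below terminates in polynomial time.

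The algorithm runs in two phases. In the doubling phase it calls \Cref{thm:k-funnel-linear-recognition} on $k = 2^0, 2^1, 2^2, \ldots$ and stops at the first exponent $J$ such that $G$ is a $2^J$-funnel. If already $J = 0$ then $k^* = 1$ and we are done; otherwise $2^{J-1}$ fails and $2^J$ succeeds, so by monotonicity $2^{J-1} < k^* \le 2^J$, and the algorithm binary-searches for $k^*$ in the integer interval $\{2^{J-1}+1, \ldots, 2^J\}$, again using \Cref{thm:k-funnel-linear-recognition} as the (monotone) predicate. The doubling phase performs $J+1$ calls and the binary-search phase $O(\log(2^{J-1})) = O(J)$ calls; since $2^{J-1} < k^*$ we have $J = O(\log k^*)$, so the total number of calls is $O(\log k^*)$, each costing $O(|V|+|E|)$ time by \Cref{thm:k-funnel-linear-recognition}, for a total of $O((|V|+|E|)\log k^*)$. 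Every tested value is at most $2^J < 2k^*$, so all arithmetic inside the calls stays on numbers of magnitude $O(k^*)$, matching the assumption in the statement.

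I do not expect a genuine obstacle here: once the monotonicity of the $k$-funnel property is in hand, the rest is the routine bookkeeping of an exponential-then-binary search. The one point worth stating carefully is precisely that monotonicity, since the correctness of both the doubling and the binary search rests on it.
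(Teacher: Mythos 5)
Your proposal is correct and follows essentially the same route as the paper: the corollary is obtained by wrapping the linear-time recognition algorithm of \Cref{thm:k-funnel-linear-recognition} in an exponential (doubling-then-binary) search, relying on the monotonicity of the $k$-funnel property, and your bound of $2k$ on the tested values matches the paper's stated assumption on arithmetic operations. The only difference is that you spell out the monotonicity argument and the bound $k^* \le |\paths|$ explicitly, which the paper leaves implicit.
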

 
 Assuming constant time arithmetic operations on numbers up to $\max_{e\in E} \mu(e)$ the problem is solvable in linear time by noting that the answer is equal to the weight of a widest path.
 
 \begin{restatable}{lemma}{linearMinK}\label{lemma:linear-min-k}
 Let $G = (V, E)$ be a DAG. We can find the minimum $k$ such that $G$ is a $k$-funnel in $O(|V|+|E|)$ time.
 \end{restatable}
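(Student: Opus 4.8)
The plan is to reduce the problem to a single bottleneck (widest‑path) computation. First I would record the monotonicity built into the definitions: an edge $e$ is $k$‑shared iff $\mu(e) > k$, so as $k$ grows fewer edges are $k$‑shared, and consequently if $G$ is a $k$‑funnel then it is a $k'$‑funnel for every $k' \ge k$. Hence the set of feasible $k$ is an up‑set $\{k : k \ge k^\star\}$ and it suffices to pin down the threshold $k^\star$. Unfolding \Cref{def:kfunnel}: $G$ \emph{fails} to be a $k$‑funnel exactly when some source‑to‑sink path $P$ has all of its edges $k$‑shared, i.e.\ $\min_{e\in P}\mu(e) > k$. Writing $W := \max_{P}\min_{e\in P}\mu(e)$, the maximum over source‑to‑sink paths $P$ of the bottleneck value of $P$ when edge $e$ carries weight $\mu(e)$ — that is, the weight of a widest path — this reads: $G$ is not a $k$‑funnel iff $W > k$, equivalently $G$ is a $k$‑funnel iff $k \ge W$. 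So the sought minimum is exactly $k^\star = W$.

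It remains to compute $W$ in linear time. I would first obtain $\mu(e)$ for every edge by running the vertex‑form dynamic programs of \Cref{eq:s2s-counting} on a topological order and on its reverse, giving $\mu_s(u),\mu_t(v)$ for all vertices and hence $\mu\big((u,v)\big)=\mu_s(u)\cdot\mu_t(v)$; under the stated assumption of constant‑time arithmetic on integers up to $\max_{e\in E}\mu(e)$ this is $O(|V|+|E|)$. (Edges lying on no source‑to‑sink path get $\mu(e)=0$ and play no role, since they cannot occur on any source‑to‑sink path.) Then, in one pass over a reverse topological order, I would compute $\beta(v)$, the maximum bottleneck $\min_{e\in Q}\mu(e)$ over paths $Q$ from $v$ to a sink, with the convention $\beta(v)=+\infty$ when $v$ is a sink (the empty path), via $\beta(v)=\max_{w\in N^{+}_{v}}\min\big(\mu((v,w)),\beta(w)\big)$; finally $W=\max\{\beta(s) : s \text{ a source}\}$. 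Because $G$ has no isolated vertices, every source has an out‑edge and hence at least one path to a sink, so each $\beta(s)$ is finite and $W\ge 1$ is well defined. A routine induction on the reverse topological order shows $\beta(v)$ equals the claimed maximum bottleneck, and taking the maximum over sources yields $W=\max_P\min_{e\in P}\mu(e)$ over all source‑to‑sink paths. Every step is $O(|V|+|E|)$.

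There is no serious obstacle here; the two points that need care are (i) proving the reduction in \emph{both} directions — that $k<W$ forces a bad path (take a path attaining the maximum $W$, all of whose edges then have $\mu>k$, hence are $k$‑shared) and that $k\ge W$ makes every source‑to‑sink path contain a $k$‑private edge (its bottleneck edge) — and (ii) keeping all arithmetic within the promised bound, which is why $\mu(e)$ is formed as the product $\mu_s(u)\cdot\mu_t(v)$ of the two vertex quantities rather than through an edge‑indexed recurrence, exactly as remarked after \Cref{eq:s2s-counting}.
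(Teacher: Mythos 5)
Your proposal is correct and follows essentially the same route as the paper: compute $\mu(e)$ via \Cref{eq:s2s-counting} in vertex form, observe that the minimum $k$ equals the weight $W=\max_P\min_{e\in P}\mu(e)$ of a widest source-to-sink path, and compute $W$ by a linear-time bottleneck dynamic program over a topological order. Your explicit two-directional argument ($k<W$ gives a path of only $k$-shared edges; $k\ge W$ gives every path a $k$-private bottleneck edge) matches the paper's observation that $G$ is not a $(W-1)$-funnel but is a $W$-funnel.
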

  \begin{proof}
 We compute $\mu(e)$ for every $e \in E$ by using the dynamic programming algorithm specified by \Cref{eq:s2s-counting} on a topological ordering of $G$. Since constant time arithmetic operations are assumed for numbers up to $\max_{e\in E} \mu(e)$, the previous computation takes linear time. Then, we compute the weight of a source-to-sink path $P$ maximizing $\min_{e\in P} \mu(e)$, and report this value. This problem is known as the widest path problem~\cite{pollack1960maximum,shacham1992multicast,magnanti1993network,ullah2009algorithm,schulze2011new} and it can be solved in linear time in DAGs~\cite{vatinlen2008simple,hartman2012split} by a dynamic program on a topological order of the graph. By completeness, we show a dynamic programming recurrence to compute $W[e]$, the weight of a source-to-$e$ path $P$ maximizing $\min_{e'\in P} \mu(e')$.
 \begin{align*}
     W[e = (u, v)] &= \mu(e)\cdot\mathbbm{1}_{d^{-}_{u} = 0}  + \sum_{u' \in N^{-}_{u}} \min(W[(u', u)], \mu((u', u)))
 \end{align*}
 Finally, note that if we denote $w$ to the weight of a widest path, then there is a source-to-sink path using only $w-1$-shared edges, $G$ is not $w-1$-funnel. Moreover, there cannot be a source-to-sink path using only $w$-shared edges, since such a path would contradict $w$ being the weight of a widest path. As such, $w$ is the minimum $k$ such that $G$ is $k$-funnel.
 \end{proof}
 
 We now define three classes of DAGs closely related to $k$-funnels.
 \begin{definition}\label{def:sk-tk}
 We say that a DAG $G = (V, E)$ belongs to the class $\source_k$ ($\T_k$) if for every $v \in V$, $\mu_s(v)$ ($\mu_t(v)$) $\le k$.
 \end{definition}

 \begin{definition}\label{def:stk}
  We say that a DAG $G = (V, E)$ belongs to the class $\ST_k$ if for every $v \in V$, $\mu_s(v) \le k$ or $\mu_t(v)\le k$.
 \end{definition}
 
 \begin{lemma}\label{lemma:subset-stk}
 $\source_k, \T_k \subseteq k$-funnels $\subseteq \ST_{k}$.
 \end{lemma}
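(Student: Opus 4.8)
The plan is to prove the chain of inclusions in two halves, $\source_k,\T_k\subseteq k\text{-funnels}$ and $k\text{-funnels}\subseteq\ST_k$, in both cases using only the product decomposition $\mu(e)=\mu_s(u)\cdot\mu_t(v)$ for $e=(u,v)$ from \Cref{eq:s2s-counting} together with one elementary monotonicity fact: in a DAG, if $u$ reaches $v$ (or $u=v$) then $\mu_s(u)\le\mu_s(v)$ and $\mu_t(u)\ge\mu_t(v)$. I would prove this monotonicity by fixing a $u$-to-$v$ path $Q$: concatenating $Q$ after any source-to-$u$ path yields a source-to-$v$ path, and prepending $Q$ to any $v$-to-sink path yields a $u$-to-sink path. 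In a DAG these concatenations are genuinely simple paths, because a vertex cannot simultaneously lie strictly before $u$ (on a path into $u$) and strictly after $u$ (on a path out of $u$) without creating a cycle; hence the two maps are injections and the inequalities follow.

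For $\source_k\subseteq k\text{-funnels}$ I would take an arbitrary source-to-sink path $P=v_1,\dots,v_\ell$ (which has at least one edge, by the standing no-isolated-vertices assumption) and inspect its last edge $e=(v_{\ell-1},v_\ell)$. Since $v_\ell$ is a sink, $\mu_t(v_\ell)=1$, so $\mu(e)=\mu_s(v_{\ell-1})\le k$ because $G\in\source_k$; thus $e$ is $k$-private and $P$ is not built from $k$-shared edges only. As this holds for every source-to-sink path, $G$ is a $k$-funnel. The inclusion $\T_k\subseteq k\text{-funnels}$ is the mirror image, applied to the first edge $(v_1,v_2)$ of $P$: here $\mu_s(v_1)=1$, so $\mu(e)=\mu_t(v_2)\le k$.

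For $k\text{-funnels}\subseteq\ST_k$ I would argue by contraposition. Suppose some $v\in V$ has $\mu_s(v)>k$ and $\mu_t(v)>k$. Every vertex of a DAG lies on some source-to-sink path, so fix one such path $P=v_1,\dots,v_\ell$ with $v=v_i$. For an edge $e=(v_j,v_{j+1})$ of $P$, either $j+1\le i$, in which case $v_{j+1}$ reaches $v$ (or equals $v$) and monotonicity gives $\mu_t(v_{j+1})\ge\mu_t(v)>k$, or $j\ge i$, in which case $v$ reaches $v_j$ (or equals $v_j$) and $\mu_s(v_j)\ge\mu_s(v)>k$. In either case $\mu(e)=\mu_s(v_j)\cdot\mu_t(v_{j+1})>k$, so $e$ is $k$-shared; since every index $j$ satisfies $j+1\le i$ or $j\ge i$, the whole path $P$ consists of $k$-shared edges, contradicting that $G$ is a $k$-funnel. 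These inclusions together give \Cref{lemma:subset-stk}.

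The individual steps are all short; the only point that needs a bit of care — and it is genuinely minor — is the monotonicity lemma, specifically the justification that the path concatenations used to build the injections are simple, which is exactly where acyclicity of $G$ is used. Everything else is bookkeeping against \Cref{eq:s2s-counting} and the definitions of $k$-private/$k$-shared edges and of the classes $\source_k,\T_k,\ST_k$.
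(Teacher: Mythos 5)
Your proposal is correct and follows essentially the same route as the paper: the first inclusion via the last (resp.\ first) edge of an arbitrary source-to-sink path using $\mu(e)=\mu_s(u)\cdot\mu_t(v)$ with one factor equal to $1$, and the second inclusion by contradiction, showing every edge of a source-to-sink path through a vertex with $\mu_s(v),\mu_t(v)>k$ is $k$-shared. The only difference is that you spell out the reachability monotonicity of $\mu_s$ and $\mu_t$ (which the paper invokes implicitly), a harmless and slightly more careful addition.
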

 \begin{proof}
 We first prove that $\source_k, \T_k \subseteq k$-funnels. Consider $G \in \source_k$ ($G \in \T_k$), and take any source-to-sink path $P$ of $G$. Let $(u,v)$ be the last (first) edge of $P$, then by \Cref{eq:s2s-counting} $\mu((u,v)) = \mu_s(u)\cdot\mu_t(v)$, but since $\mu_t(v) = 1$ ($v$ is a sink) and $\mu_s(u) \le k$ ($G \in \source_k$) (analogously, $\mu_s(u) = 1$ and $\mu_t(v) \le k$), then $\mu((u,v)) \le k$, and thus $(u, v)$ is a $k$-private edge.
 To prove that $k$-funnels $\subseteq \ST_{k}$, suppose that $G$ is a $k$-funnel, and by contradiction that there exists $v\in V$ with $\mu_s(v), \mu_t(v) > k$. Consider any source-to-sink path $P$ using $v$. Now, let $(u,w)$ be any edge in $P$ before (after) $v$, then $\mu_t(w) \ge \mu_t(v) > k$ ($\mu_s(u) \ge \mu_s(v) > k$), and thus $\mu((u,w)) = \mu_s(u)\cdot\mu_t(w) > k$. As such, $P$ does not have a $k$-private edge, a contradiction.
 \end{proof}
 
 For $k = 1$, $\source_1$ describes out-forests and $\T_1$ in-forests, thus being more restrictive than funnels. Moreover, we note that the in(out)-star of $k$ vertices, that is $k-1$ vertices pointing to a sink (pointed from a source), $\not\in \source_k$ ($\T_k$), but this graph is a funnel. On the other hand, for the vertex partition characterization of funnels (\Cref{thm:funnels-characterizations}~\cite{millani2020efficient}) we have that $\ST_1 =$ ($1$-)funnels. However, for $k>1$, the containment $k$-funnels $\subseteq \ST_{k}$ is strict (\Cref{fig:strict-containment}).
 
 \begin{figure}
     \centering
     \includegraphics{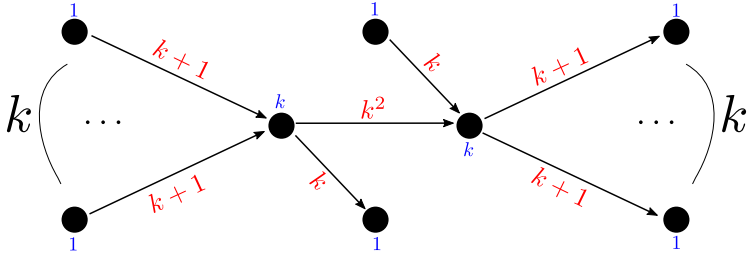}
     \caption{A DAG in $\ST_k$ that is not a  $k$-funnel, for every $k > 1$. The central edge is a forbidden path whose first vertex has indegree $k$ and outdegree $2$, and last vertex has indegree $2$ and outdegree $k$, the rest of the edges have either a source tail or a sink head. The blue label next to each vertex $v$ corresponds to $\min(\mu_s(v),\mu_t(v))$, since the maximum of these labels is $k$ the graph belongs to $\ST_k$. The red label next to each edge $e$ corresponds to $\mu(e)$, since there is a source-to-sink path with no $k$-private edge the graph is not a $k$-funnel.}
     \label{fig:strict-containment}
 \end{figure}

By noting that the minimum $k$ such that a DAG is in $\source_k$, $\T_k$ and $\ST_k$ is $\max_{v\in V} \mu_s(v)$, $\max_{v\in V} \mu_t(v)$ and $\max_{v\in V} \min(\mu_{s}(v), \mu_{t}(v))$, respectively, we obtain the same results as in \Cref{thm:k-funnel-linear-recognition,cor:exponential-min-k,lemma:linear-min-k} (with analogous assumptions on the cost of arithmetic operations) for recognition of $\source_k$, $\T_k$ and $\ST_k$.

 Next, we prove that although the vertex partition characterization of funnels does not generalizes to $k$-funnels, it does for the class $\ST_k$ and it can be found efficiently.
 
 \begin{lemma}\label{thm:stk-partitioning}
 Let $G = (V, E) \in \ST_k$ and $k$ given as inputs. We can find, in $O(|V|+|E|)$ time, a partition $V = V_1 \dot\cup V_2$ such that $G[V_1]\in \source_k$, $G[V_2]\in \T_k$ and there are no edges from $V_2$ to $V_1$. Moreover, if such a partition of a DAG $G$ exists, then $G \in \ST_k$.
 \end{lemma}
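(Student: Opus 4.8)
The plan is to read the partition straight off the quantity $\mu_s$: set $V_1 = \{v \in V \mid \mu_s(v) \le k\}$ and $V_2 = V \setminus V_1 = \{v \in V \mid \mu_s(v) > k\}$, where $\mu_s$ is computed in the whole graph $G$ by the dynamic program of \Cref{eq:s2s-counting} on a topological order. To keep the running time $O(|V|+|E|)$ (with arithmetic only on numbers up to $k+1$), I would compute $\min(\mu_s(v),k+1)$ instead of $\mu_s(v)$, truncating each prefix sum over in-neighbors as soon as it exceeds $k$; this suffices to decide membership in $V_1$. Note that since $G \in \ST_k$, every $v \in V_2$ has $\mu_s(v) > k$ and hence $\mu_t(v) \le k$.

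The single structural fact driving correctness is: \emph{there are no edges from $V_2$ to $V_1$}. Indeed, if $(u,v) \in E$ with $u \in V_2$, then $u$ contributes $\mu_s(u)$ to the sum defining $\mu_s(v)$ in \Cref{eq:s2s-counting}, so $\mu_s(v) \ge \mu_s(u) > k$ and $v \in V_2$. Induction along edges then gives two closure properties: every ancestor in $G$ of a vertex of $V_1$ lies in $V_1$, and every descendant in $G$ of a vertex of $V_2$ lies in $V_2$. Consequently, for $v \in V_1$ the source-to-$v$ paths of $G$ are exactly those of $G[V_1]$ — such a path lies entirely inside $V_1$ (all its vertices are ancestors of $v$ or $v$), and its start, being a source of $G[V_1]$ that reaches $v$, has no in-neighbor in $V_1$ and none in $V_2$ (else a forbidden $V_2\to V_1$ edge), hence is a source of $G$ — so $\mu_s^{G[V_1]}(v) = \mu_s^{G}(v) \le k$, i.e. $G[V_1] \in \source_k$. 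Symmetrically, for $v \in V_2$ the sink-reaching paths of $G$ and of $G[V_2]$ coincide, giving $\mu_t^{G[V_2]}(v) = \mu_t^{G}(v) \le k$ and $G[V_2] \in \T_k$.

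For the converse, suppose $G$ admits a partition $V = V_1 \dot\cup V_2$ with $G[V_1] \in \source_k$, $G[V_2] \in \T_k$, and no edges from $V_2$ to $V_1$. The same induction shows every ancestor of a $V_1$-vertex stays in $V_1$ and every descendant of a $V_2$-vertex stays in $V_2$, and the same source/sink-preservation argument yields $\mu_s^{G}(v) = \mu_s^{G[V_1]}(v) \le k$ for $v \in V_1$ and $\mu_t^{G}(v) = \mu_t^{G[V_2]}(v) \le k$ for $v \in V_2$; in either case $\min(\mu_s(v),\mu_t(v)) \le k$, so $G \in \ST_k$ by \Cref{def:stk}. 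The whole algorithm is one topological sort, one truncated pass of the $\mu_s$ recurrence, and a linear scan to form $V_1, V_2$, all in $O(|V|+|E|)$ time.

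The step needing the most care is the equality $\mu_s^{G[V_1]}(v) = \mu_s^{G}(v)$ (and its dual $\mu_t^{G[V_2]}(v) = \mu_t^{G}(v)$): one must verify not only that the paths themselves survive when passing to the induced subgraph, but also that the notion of ``source'' (resp. ``sink'') is unchanged, which is precisely the second use of the no-$V_2\to V_1$-edges property. Everything else is a routine induction along the edges of the DAG.
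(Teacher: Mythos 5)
Your proof is correct and follows essentially the same route as the paper: the same partition $V_1 = \{v \in V \mid \mu_s(v) \le k\}$ computed by the truncated dynamic program of \Cref{thm:k-funnel-linear-recognition}, and the same argument that an edge $(u,v)$ with $u \in V_2$ would force $\mu_s(v) \ge \mu_s(u) > k$, ruling out $V_2 \to V_1$ edges. The only difference is that you explicitly verify $\mu_s^{G[V_1]}(v) = \mu_s^{G}(v)$ for $v \in V_1$ (and dually $\mu_t^{G[V_2]}(v) = \mu_t^{G}(v)$), both for the forward direction and the converse — a point the paper treats as immediate — so this is added care rather than a different approach.
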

 \begin{proof}
 We set $V_1 = \{v \in V \mid \mu_s(v) \le k\}$ and $V_2 = V \setminus V_1$. Note that finding $V_1$ takes linear time, since we can apply the algorithm described in \Cref{thm:k-funnel-linear-recognition} to compute the $\mu_s$ values (or decide that they are more than $k$). By construction we know that every $v \in V_1$ has $\mu_s(v) \le k$, and since $G \in \ST_k$ also every $v \in V_2$ has $\mu_t(v) \le k$, thus $G[V_1]\in \source_k$, $G[V_2]\in \T_k$. Suppose by contradiction that there exists $e = (u, v)\in E \cap (V_2\times V_1)$. As such, $\mu_s(u) > k$, but since $\mu_s(u) \le \mu_s(v)$, then $\mu_s(v) > k$, a contradiction. Finally, if such a partition exists then $\mu_s(v) \le k$ for every $v \in V_1$ and $\mu_t(v) \le k$ for every $v \in V_2$, and thus $G\in\ST_k$.
 \end{proof}
 
 \section{Parameterized algorithms: The DAG}\label{sec:the-dag}
 
 The main idea to get the parameterized algorithms in this section is to bound the size of the $PI_v$ sets by a topological graph parameter and use \Cref{lemma:parameterized-vertex,cor:linear-parameterized-vertex} to obtain a parameterized solution. As in the KMP algorithm~\cite{knuth1977fast} only one prefix-incomparable value suffices (the longest prefix match until that point), we show that $\mu_s(v)$ prefix-incomparable values suffice to capture the prefix matches up to $v$.
 
 \begin{lemma}\label{lemma:prefix-incomparable-topo-bound}
 Let $G = (V, E)$ be a DAG, $v \in V$, $\paths_{sv}$ the set of source-to-$v$ paths, $\ell: V \rightarrow \Sigma$ a labeling function, $S \in \Sigma^m$ a string, and $PI_v$ as in \Cref{def:bv-piv}. Then, $|PI_v| \le \mu_s(v)$.
 \end{lemma}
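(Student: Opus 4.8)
The plan is to prove the bound by induction on a topological ordering $v_1,\dots,v_{|V|}$ of $G$, piggybacking on the way $PI_v$ is produced in \Cref{lemma:parameterized-vertex} and on the source-counting recurrence in \Cref{eq:s2s-counting}. For the base case, if $v$ is a source then $\mu_s(v)=1$, and the analysis in \Cref{lemma:parameterized-vertex} gives $PI_v=\{0\}$ or $PI_v=\{1\}$, so $|PI_v|=1=\mu_s(v)$. For the inductive step, assume $v$ is not a source, so that $\mu_s(v)=\sum_{u\in N^{-}_v}\mu_s(u)$ (the indicator term in \Cref{eq:s2s-counting} vanishes). The construction in the proof of \Cref{lemma:parameterized-vertex} builds $PI_v$ by collecting the values $A_S(i,\ell(v))$ over all $u\in N^{-}_v$ and all $i\in PI_u$ and then deleting the prefix-comparable (non-leaf) ones; hence $PI_v$ is contained in the image of $\bigcup_{u\in N^{-}_v}PI_u$ under the map $i\mapsto A_S(i,\ell(v))$, and since applying a function cannot increase cardinality,
\[
|PI_v|\ \le\ \Bigl|\bigcup_{u\in N^{-}_v}PI_u\Bigr|\ \le\ \sum_{u\in N^{-}_v}|PI_u|\ \le\ \sum_{u\in N^{-}_v}\mu_s(u)\ =\ \mu_s(v),
\]
where the third inequality is the induction hypothesis. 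This closes the induction.

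The only point that needs care is the containment $PI_v\subseteq\{A_S(i,\ell(v))\mid i\in\bigcup_{u}PI_u\}$, but this is literally what the algorithm in \Cref{lemma:parameterized-vertex} produces, so citing that construction suffices and I do not anticipate a real obstacle here; the work is essentially bookkeeping plus the observation $\mu_s(v)=\sum_{u}\mu_s(u)$ for non-sources.

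If one preferred an argument independent of the algorithmic description, I would instead argue semantically. For each $i\in PI_v\subseteq B_v$ fix a path $Q_i$ ending in $v$ with $\ell(Q_i)=S[1..i]$ (so $|Q_i|=i$), and extend it backwards, arbitrarily, to a source-to-$v$ path $\widehat Q_i$ (possible since every vertex of a DAG is reached by some source). If $\widehat Q_i=\widehat Q_j$ for $i<j$, then $Q_i$ and $Q_j$ are both suffix-subpaths of the same simple path ending at $v$, so $Q_i$ is a suffix of $Q_j$; since $|PI_v|\ge 2$ forces $i\ge 1$ (the root $0$ of $f_S$ is a leaf of $f_S[B_v]$ only when $B_v=\{0\}$), this makes $S[1..i]$ both a proper prefix and a proper suffix of $S[1..j]$, i.e.\ a border of $S[1..j]$, contradicting that $i$ and $j$ are prefix-incomparable. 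Hence $i\mapsto\widehat Q_i$ is injective from $PI_v$ into the set of source-to-$v$ paths, again yielding $|PI_v|\le\mu_s(v)$.
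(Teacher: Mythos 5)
Both of your arguments are sound, and your second (semantic) argument is essentially the paper's own proof: the paper writes $B_v$ as a union over source-to-$v$ paths $P_{sv}\in\paths_{sv}$ of the prefix lengths matching suffixes of $\ell(P_{sv})$, observes that any two such lengths coming from the same $P_{sv}$ are prefix-comparable (one label is a suffix of the other), and concludes that $PI_v$ contains at most one element per source-to-$v$ path; your injective map $i\mapsto\widehat Q_i$ is the same idea phrased as injectivity rather than as a covering argument, and your handling of the $i=0$ corner case (only relevant when $|PI_v|\ge 2$) is correct. Your first argument is a genuinely different route: an induction along a topological order that combines the recurrence $\mu_s(v)=\sum_{u\in N^-_v}\mu_s(u)$ from \Cref{eq:s2s-counting} with the fact that the construction in \Cref{lemma:parameterized-vertex} outputs exactly the set $PI_v$ of \Cref{def:bv-piv} and that this output lies in the image of $\bigcup_{u\in N^-_v}PI_u$ under $i\mapsto A_S(i,\ell(v))$. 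This is valid and not circular (the proof of \Cref{lemma:parameterized-vertex} uses only the $w$-bound of \Cref{lemma:bounded-size-prefix-incomparable}, not this lemma), but it buys the bound at the price of importing the correctness of that algorithmic construction; the paper's argument (and your fallback) is purely combinatorial, independent of any algorithm, and shows directly that $|PI_v|\le|\paths_{sv}|=\mu_s(v)$, which is why the paper states it before, and separately from, the algorithmic results that use it.
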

 \begin{proof}
  Since any path ending in $v$ is the suffix of a source-to-$v$ path we can write $B_v$ as:
  \begin{align*}
      B_v &= \bigcup_{P_{sv} \in \paths_{sv}} B_{P_{sv}}:= \{i \in \{0,\ldots,m\} \mid \exists P \text{ suffix of }P_{sv}, \ell(P) = S[1..i]\}
  \end{align*}
  
  However, for every pair of values $i < j \in B_{P_{sv}}$, $S[1..i]$ is a border of $S[1..j]$ (it is a suffix since they are both suffixes of $\ell(P_{sv})$). As such, at most one value of $B_{P_{sv}}$ appears in $PI_v$, and then $|PI_v| \le |\paths_{sv}| = \mu_s(v)$.
 \end{proof}
 
 This result directly implies a parameterized string matching algorithm to DAGs in $\source_k$.
 
 \begin{lemma}\label{thm:dag-algorithm-dag-1}
 Let $G = (V, E) \in \source_k$, $\ell: V \rightarrow \Sigma$ a labeling function and $S \in \Sigma^m$ a string. We can decide whether $S$ has a match in $G,\ell$ in time $O(|V|k+|E| + \sigma m)$.
 \end{lemma}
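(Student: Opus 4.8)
The plan is to execute the DAG algorithm of \Cref{sec:dag-incomparable} essentially verbatim — process the vertices in topological order, and for each $v$ compute $PI_v$ from the sets $PI_u$ with $u \in N^-_v$ via \Cref{cor:linear-parameterized-vertex}, keeping every $PI_v$ stored in the suffix-tree order on $T_r$ that lemma requires — and then answer ``yes'' exactly when $m \in PI_v$ for some $v$. Correctness needs only \Cref{def:bv-piv} and \Cref{lemma:parameterized-vertex}: inductively on the topological order these give $PI_v \subseteq B_v$ together with the property that every $i \in B_v$ has a witness $j \in PI_v$ with $i = j$ or $S[1..i]$ a border of $S[1..j]$. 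Since $S$ has a match iff $m \in B_v$ for some $v$, and since $S[1..m] = S$ cannot be a proper prefix of any $S[1..j]$ with $j \le m$, we get $m \in B_v \iff m \in PI_v$; hence it suffices to test, while building each $PI_v$, whether it contains $m$.

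The running-time bound is where membership in $\source_k$ must be used more carefully than a blanket $|PI_v| \le k$. First I would record the structural consequence of $G \in \source_k$: for a non-source $v$ the indicator term of \Cref{eq:s2s-counting} vanishes, so $\mu_s(v) = \sum_{u \in N^-_v}\mu_s(u) \le k$, and in particular $d^-_v \le \mu_s(v) \le k$. Combining this with \Cref{lemma:prefix-incomparable-topo-bound} ($|PI_u| \le \mu_s(u)$) yields the key inequality
\[
 k_v \;:=\; \sum_{u \in N^-_v} |PI_u| \;\le\; \sum_{u \in N^-_v}\mu_s(u) \;=\; \mu_s(v) \;\le\; k .
\]
So the work at $v$ is $O(k_v)$ by the $O(k_v)$-time form of \Cref{cor:linear-parameterized-vertex}, plus $O(d^-_v + 1)$ to enumerate $N^-_v$, fetch the stored incomparable lists, handle the source case, and check membership of $m$.

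Summing over $V$, and adding the one-time costs — the $O(m)$ preprocessing of \Cref{cor:linear-parameterized-vertex} (the $LCA$/BP structures on $T_r$), the assumed $O(\sigma m)$ construction of $A_S$, and the $O(|V|+|E|)$ topological sort — gives
\[
 O\!\Big(\sigma m + |V| + |E| + \sum_{v\in V}\big(k_v + d^-_v\big)\Big)
 \;=\; O\!\Big(\sigma m + |E| + \sum_{v\in V}\mu_s(v)\Big)
 \;=\; O\big(|V|k + |E| + \sigma m\big),
\]
using $k_v \le \mu_s(v) \le k$ and $\sum_{v} d^-_v = |E|$. The hard part is precisely the displayed inequality $k_v \le \mu_s(v)$: it is what removes the extra $d^-_v$ (equivalently, extra $k$) factor per vertex and produces $|V|k$ instead of the $(|V|+|E|)k$ of \Cref{thm:param-alg-top-1}. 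A secondary technicality I would need to check is that merging the $d^-_v$ incoming sorted incomparable lists (after applying $A_S(\cdot,\ell(v))$) into the single sorted incomparable set $PI_v$ can be carried out within the $O(k_v)$ budget of \Cref{cor:linear-parameterized-vertex}, rather than with a $\log d^-_v$ merge overhead.
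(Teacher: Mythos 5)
Your proposal is correct and follows essentially the same route as the paper: run the topological-order DAG algorithm computing each $PI_v$ with the $O(k_v)$ version of \Cref{cor:linear-parameterized-vertex}, and bound $k_v = \sum_{u \in N^-_v}|PI_u| \le \sum_{u \in N^-_v}\mu_s(u) \le \mu_s(v) \le k$ via \Cref{lemma:prefix-incomparable-topo-bound} and \Cref{eq:s2s-counting}, which is exactly the paper's key inequality (the paper writes $\le k+1$ to absorb the indicator term). Your extra remarks — the explicit correctness check that $m \in B_v \iff m \in PI_v$ and the merge-overhead caveat — concern machinery already delegated to \Cref{thm:dag-algorithm-pattern} and \Cref{cor:linear-parameterized-vertex}, so they do not change the argument.
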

 \begin{proof}
 We proceed as in \Cref{thm:dag-algorithm-pattern}, but instead we compute each $PI_v$ with the $O(k_v)$ version of \Cref{cor:linear-parameterized-vertex}. The claimed running time follows since $k_v = \sum_{u \in N^{-}_{v}} |PI_u| \le \sum_{u \in N^{-}_{v}} \mu_s(u) \le 1 + \mu_s(v) \le k+1$, by~\Cref{lemma:prefix-incomparable-topo-bound},~\Cref{eq:s2s-counting} and since $G \in \source_k$. 
 \end{proof}
 
 A simple but interesting property about string matching to graphs is that we obtain the same problem by reversing the input (both the graph and the string), that is, $S$ has a match in $G, \ell$ if and only if $S^r$ has a match in $G^r, \ell$. This fact, plus noting that $G\in \source_k$ if and only if $G^r \in \T_k$ gives the following corollary of \Cref{thm:dag-algorithm-dag-1}.
 
 \begin{corollary}\label{cor:dag-algorithm-dag-1}
  Let $G = (V, E) \in \T_k$, $\ell: V \rightarrow \Sigma$ a labeling function and $S \in \Sigma^m$ a string. We can decide whether $S$ has a match in $G,\ell$ in time $O(|V|k+|E| + \sigma m)$.
 \end{corollary}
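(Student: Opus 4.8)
The plan is to obtain this as a direct consequence of \Cref{thm:dag-algorithm-dag-1} via the reversal symmetry of string matching to graphs. First I would record two elementary observations. The first is the symmetry already noted just above the statement: a path $P = v_1, \ldots, v_{|P|}$ spells $S$ in $G, \ell$ if and only if the reversed sequence $v_{|P|}, \ldots, v_1$ is a path of $G^r$ spelling $S^r$; hence $S$ has a match in $G, \ell$ if and only if $S^r$ has a match in $G^r, \ell$. The second is that reversal exchanges the roles of sources and sinks and of the two path counts: for every $v \in V$, the number of source-to-$v$ paths of $G^r$ equals the number of $v$-to-sink paths of $G$ (each is the reversal of the other), so the value $\mu_s(v)$ computed in $G^r$ equals the value $\mu_t(v)$ computed in $G$. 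In particular $G \in \T_k$ if and only if $G^r \in \source_k$.

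Given these observations, the algorithm is: build $G^r$ and $S^r$, run the algorithm of \Cref{thm:dag-algorithm-dag-1} on the instance $G^r, \ell, S^r$ — which is a valid input since $G^r \in \source_k$ — and return its answer. Correctness is immediate from the first observation. For the running time, $G^r$ is computed from $G$ in $O(|V| + |E|)$ time by reversing every edge, $S^r$ in $O(m)$ time, and, consistently with the footnote assumptions of the preliminaries, the failure tree and matching automaton of $S^r$ are computed in $O(m)$ and $O(\sigma m)$ time respectively. The invocation of \Cref{thm:dag-algorithm-dag-1} then runs in $O(|V|k + |E| + \sigma m)$ time, which dominates the preprocessing, giving the claimed bound.

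There is essentially no obstacle here beyond checking that reversal inflates neither the parameter nor the input size; the only point worth stating explicitly is that the topological order of $G^r$ required by \Cref{thm:dag-algorithm-dag-1} can be obtained as the reverse of a topological order of $G$ in $O(|V| + |E|)$ time, so every $\Omega(|V|+|E|)$ and $\Omega(\sigma m)$ precondition used by the invoked lemma is met without changing the asymptotics.
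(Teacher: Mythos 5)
Your proof is correct and follows exactly the paper's route: reduce to \Cref{thm:dag-algorithm-dag-1} by reversing both the graph and the string, using that $S$ matches in $G,\ell$ iff $S^r$ matches in $G^r,\ell$ and that $G \in \T_k$ iff $G^r \in \source_k$. The extra bookkeeping you add (reversed topological order, recomputing $f_{S^r}$ and $A_{S^r}$ in $O(m)$ and $O(\sigma m)$ time) is consistent with the paper's assumptions and does not change the argument.
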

 
 
 With these two results and the fact that we can compute the minimum $k$ such that a DAG is in $\source_k, \T_k$ in time $O((|V|+|E|)\log{k})$ (see \Cref{cor:exponential-min-k}) we obtain our first algorithm parameterized by the topology of the DAG.

\paramAlgTopoOne*
 
 Our final result is a parameterized algorithm for DAGs in $\ST_k$ (in particular for $k$-funnels). We note that the algorithm of \Cref{cor:dag-algorithm-dag-1} computes $PI_v$ for $S^r$ for every vertex in $G^r$. Recall that $PI_v$ represents all the prefix matches of $S^r$ with paths ending in $v$ in $G^r$. In other words, it represents all suffix matches of $S$ with paths starting in $v$ in $G$. For clarity, let us call this set $SI_v$. The main idea of the algorithm for $\ST_k$ is to use \Cref{thm:stk-partitioning} to find a partitioning $V = V_1 \dot\cup V_2$ into $\source_k$ and $\T_k$, use \Cref{thm:dag-algorithm-dag-1,cor:dag-algorithm-dag-1} to search for matches within each part and also to compute $PI_v$ for every $v \in V_1$ and $SI_v$ for every $v \in V_2$, and finally, to find matches using the edges from $V_1$ to $V_2$. The last ingredient of our algorithm consists of preprocessing the answers to the last type of matches.
 
 \begin{lemma}\label{lemma:prefix-suffix}
 Let $G = (V, E)$ a DAG, $(u, v) \in E$, $\ell: V \rightarrow \Sigma$ a labeling function, $S \in \Sigma^m$ a string and $PI_u$ and $SI_v$ as in \Cref{def:bv-piv}. We can decide if there is a match of $S$ in $G,\ell$ using $(u,v)$ in $O(|PI_u|\cdot|SI_v|)$ time, after $O(m^2)$ preprocessing time.
 \end{lemma}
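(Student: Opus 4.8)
The plan is to precompute, independently of $u$ and $v$, a boolean $(m+1)\times(m+1)$ table recording which pairs of ``prefix length / suffix length'' split points of $S$ are jointly realizable, and then answer the query by scanning the $|PI_u|\cdot|SI_v|$ pairs coming from the two sets. The first ingredient is to pin down what $PI_u$ and $SI_v$ encode. I claim $B_u=\bigcup_{i\in PI_u}\mathrm{anc}_{f_S}(i)$, where $\mathrm{anc}$ denotes the set of ancestors of a node in the failure tree, the node itself included. The inclusion $\subseteq$ is \Cref{def:bv-piv}; for $\supseteq$, if $i\in PI_u\subseteq B_u$ and $i^\ast$ is an $f_S$-ancestor of $i$, then $S[1..i^\ast]$ is a suffix (indeed a border) of the matched prefix $S[1..i]$, hence is spelled by the corresponding suffix of the same path ending at $u$, so $i^\ast\in B_u$. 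Running the same argument on $S^r,G^r$ (which is exactly how $SI_v$ is obtained, via \Cref{cor:dag-algorithm-dag-1}) and translating back to $G$, the set $L_v:=\{\, j : S[m-j+1..m]\text{ labels a path of }G\text{ starting at }v\,\}$ equals $\bigcup_{j\in SI_v}\mathrm{anc}_{f_{S^r}}(j)$.

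Next I would establish a combinatorial characterization of ``there is a match using $(u,v)$''. Such a match is precisely a non-empty path $P_1$ ending at $u$ with $\ell(P_1)=S[1..i^\ast]$, followed by the edge $(u,v)$, followed by a non-empty path $P_2$ starting at $v$ with $\ell(P_2)=S[i^\ast+1..m]$, for some split point $i^\ast\in\{1,\dots,m-1\}$; conversely any such $P_1,P_2$ can be glued into a single simple path, because in a DAG every vertex of $P_1$ reaches $u$ and every vertex of $P_2$ is reachable from $v$, so a shared vertex would close a cycle through $(u,v)$. Hence a match using $(u,v)$ exists iff some $i^\ast\in\{1,\dots,m-1\}$ has $i^\ast\in B_u$ and $m-i^\ast\in L_v$, which by the previous paragraph is equivalent to: there are $i\in PI_u$, $j\in SI_v$, an $f_S$-ancestor $i^\ast\ge 1$ of $i$, and an $f_{S^r}$-ancestor $j^\ast\ge 1$ of $j$ with $i^\ast+j^\ast=m$.

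The algorithm follows. During preprocessing I build $f_{S^r}$ in $O(m)$ time and fill a table $T[0..m][0..m]$ row by row: for fixed $i$, walk the failure links of $S$ from $i$ to list $\mathrm{anc}_{f_S}(i)$, mark in $f_{S^r}$ every node $m-i^\ast$ with $i^\ast\in\mathrm{anc}_{f_S}(i)\cap\{1,\dots,m-1\}$, then run one DFS of $f_{S^r}$ that keeps the number of marked vertices on the current root-to-node path and sets $T[i][j]=1$ precisely for the visited $j$ with positive count, and finally unmark. Each row costs $O(m)$, so the preprocessing is $O(m^2)$ time and $O(m^2)$ space. To answer a query I scan all pairs $(i,j)\in PI_u\times SI_v$ and report a match iff some $T[i][j]=1$; this takes $O(|PI_u|\cdot|SI_v|)$ time and is correct by the characterization above.

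I expect the most delicate step to be the first one: relating ``border of a matched prefix'' to ``ancestor in a failure tree'' consistently both on the $S$-side and on the reversed $S^r$-side, so that $B_u$ and $L_v$ come out as exactly the ancestor-closures of $PI_u$ and $SI_v$, together with the small but necessary care of excluding the split points $0$ and $m$ (the roots of $f_S$ and $f_{S^r}$) so that $T$ carries no spurious entries. The DAG no-cycle argument for gluing $P_1$ and $P_2$ and the $O(m^2)$ accounting for the preprocessing are then routine.
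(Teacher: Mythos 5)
Your proposal is correct and follows essentially the same approach as the paper: precompute an $O(m^2)$-time boolean table indexed by a prefix length $i$ and a suffix length $j$ that records whether some (non-empty) border-or-equal lengths $i'$ of $S[1..i]$ and $j'$ of $S[m-j+1..m]$ satisfy $i'+j'=m$, then test all pairs in $PI_u\times SI_v$ in $O(|PI_u|\cdot|SI_v|)$ time. The only differences are cosmetic: the paper fills the table by a direct dynamic-programming recurrence over $f_S$ and $f_{S^r}$ rather than your per-row marking-plus-DFS on $f_{S^r}$ (both $O(m^2)$), and you additionally spell out two points the paper leaves implicit, namely the ancestor-closure identities $B_u=\bigcup_{i\in PI_u}\mathrm{anc}_{f_S}(i)$ and $L_v=\bigcup_{j\in SI_v}\mathrm{anc}_{f_{S^r}}(j)$, and the argument that a prefix path ending at $u$, the edge $(u,v)$, and a suffix path starting at $v$ glue into a vertex-disjoint path because $G$ is a DAG.
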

 \begin{proof}
 We precompute a boolean table $PS$ of $m\times m$ entries, such that $PS[i,j]$ is \texttt{true} if there is a length $i'$ of a (non-empty) border of $S[1..i]$ (or $i' = i$) and a length $j'$ of a (non-empty) border of $S[m-j+1..m]$ (or $j' = j$) such that $i'+j'=m$, and \texttt{false} otherwise. This table can be computed by dynamic programming in $O(m^2)$ time as follows.
 \begin{align*}
     PS[i,j] &= \begin{cases}
\texttt{false} & \text{if }i+j < m \lor i = 0 \lor j = 0\\
i+j = m \lor PS[i, f_{S^r}(j)] \lor PS[f_S(i), j] & \text{otherwise}\end{cases}
 \end{align*}
 We then use this table to test every $PS[i,j]$ with $i \in PI_u, j\in SI_v$ and report a match if any of these table entries is \texttt{true}, in total $O(|PI_u|\cdot|SI_v|)$ time.
 
 Since every match of $S$ using $(u, v)$ must match a prefix $S[1..i]$ with a path ending in $u$ and a suffix $S[i+1..m]$ with a path starting in $v$, the previous procedure finds it (if any).
 \end{proof}

 \paramAlgTopoTwo*
 \begin{proof}
 We first compute the minimum $k$ such that the input DAG is in $\ST_k$ in time $O((|V|+|E|)\log{k})$ (see \Cref{cor:exponential-min-k}). Then, we obtain the partition of $G$ into $G[V_1] \in \source_k, G[V_2] \in \T_k$ and no edges from $V_2$ to $V_1$. We then search matches within $G[V_1]$ and $G[V_2]$ in time $O(|V|k+|E| + \sigma m)$ (\Cref{thm:dag-algorithm-dag-1,cor:dag-algorithm-dag-1}) and we also keep $PI_u$ for every $u\in V_1$ and $SI_v$ for every $v\in V_2$. Finally, we process the matches using the edges $(u, v)$ with $u\in V_1, v\in V_2$ in total $O(|E|k^2 + m^2)$ time (\Cref{lemma:prefix-suffix}) since $O(|PI_u|\cdot|SI_v|) = O(k^2)$.
 \end{proof}
  
 \section{Conclusions}
 In this paper we introduced the first parameterized algorithms for matching a string to a labeled DAG, a problem known (under SETH) to be quadratic even for a very special type of DAGs. Our parameters depend on the structure of the input DAG.
 
 We derived our results from a generalization of KMP to DAGs using prefix-incomparable matches, which allowed us to bound the running time to parameterized linear. Further improvements on the running time of our algorithms remain open: is it possible to get rid of the automaton? or to combine prefix-incomparable and suffix-incomparable matches in better than quadratic (either in the size of the sets or the string)? (e.g. with a different tradeoff between query and construction time of the data structure answering these queries) and is there a (conditional) lower bound to combine these incomparable sets? (see e.g.~\cite{bernardini2019even}). Another interesting question with practical importance is whether our parameterized approach can be extended to string labeled graphs with (unparameterized) linear time in the total length of the strings or extended to counting and reporting algorithms in linear time in the number of occurrences. 
 
 We also presented novel algorithmic results on funnels as well as generalizations of them. These include linear time recognition algorithms for their different characterizations, which we showed useful for the string matching problem but hope that can also help in other graph problems. We also showed how to find the minimum $k$ for which a DAG is a $k$-funnel or $\in \ST_k$ (assuming constant time arithmetic operations on numbers up to $O(k)$) using an exponential search, but it remains open whether there exists a linear time solution.
 


\bibliography{references}

\newpage
\appendix

 \section{A parameterized algorithm: The String}\label{sec:the-pattern}
 
  A simple property about prefix-incomparable sets is that their sizes are bounded by the number of prefixes that are not a border of other prefixes of the string, equivalently, the number of leaves in the failure function of the string.
 
 \begin{lemma}\label{lemma:bounded-size-prefix-incomparable}
 Let $S \in \Sigma^m$ be a string, $f_S$ its failure function/tree, and $B \subseteq \{0,\ldots,m\}$ prefix-incomparable for $S$. Then, $|B| \le w$ such that $w$ is the number of leaves of $f_S$, equivalently $w := |\{i \in \{0,\ldots, m\} \mid \not \exists j, f_S(j) = i\}|$.
 \end{lemma}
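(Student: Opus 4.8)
The plan is to reinterpret prefix-incomparability as an antichain condition in the failure tree $f_S$, and then bound the size of any antichain by the number of leaves of $f_S$ through an injection into the leaf set.

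First I would recall from the preliminaries that the lengths of the borders of $S[1..j]$ in decreasing order are $f_S(j), f^2_S(j), \ldots, 0$; equivalently, $S[1..i]$ is a border of $S[1..j]$ if and only if $i$ is a proper ancestor of $j$ in the in-tree $f_S$. Since $f_S(j) < j$ for every $j \ge 1$, a proper ancestor of a node is always numerically smaller than it, so the hypothesis ``$S[1..i]$ is not a border of $S[1..j]$ for all $i < j$ in $B$'' is precisely the statement that no element of $B$ is a proper ancestor of another, i.e.\ $B$ is an antichain for the ancestor order of $f_S$.

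Next I would construct an injection $\lambda$ from $B$ into the set of leaves of $f_S$: for each $i \in B$ pick any leaf $\lambda(i)$ lying in the subtree of $f_S$ rooted at $i$ (such a leaf exists in any finite rooted tree; if $i$ is itself a leaf then $\lambda(i) = i$). To see $\lambda$ is injective, suppose $\lambda(i) = \lambda(j) = \ell$ for distinct $i, j \in B$. Then both $i$ and $j$ are ancestors of (or equal to) $\ell$, hence both lie on the root-to-$\ell$ path of $f_S$, so one of them is an ancestor of the other, contradicting that $B$ is an antichain. Therefore $|B| = |\lambda(B)| \le |\{\text{leaves of } f_S\}| = w$, where the last equality is just the definition of $w$ and uses that $0$ is never a leaf since $f_S(1) = 0$.

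The only point I would take care to state cleanly is the equivalence between the border relation on prefixes and the ancestor relation in $f_S$; once that is in place, the remainder is the standard fact that the maximum antichain of a rooted forest has size equal to its number of leaves (equivalently: the subtrees rooted at distinct antichain elements are pairwise disjoint, and each contains a leaf). I do not expect any genuine obstacle here beyond this bookkeeping; the edge cases ($i$ being a leaf, the empty string $0$ not being a leaf) are immediate.
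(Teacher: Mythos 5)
Your proof is correct and follows essentially the same route as the paper: identify prefix-incomparability with non-ancestry in $f_S$, then bound the antichain by the number of leaves. The paper phrases the final step as a pigeonhole argument over the $w$ leaf-to-root paths covering $f_S$, while you phrase it as an injection from $B$ into the leaves via descendant leaves; these are interchangeable formulations of the same fact.
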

 \begin{proof}
 First note that $i < j$ are prefix-incomparable if and only if $i$ is not ancestor of $j$ in $f_S$. Suppose by contradiction that $|B| > w$, and consider the $w$ leaf-to-root paths of $f_S$. Note that these $w$ leaf-to-root paths cover all the vertices of $f_S$. By pigeonhole principle, there must be $i < j \in B$ in the same leaf-to-root path, that is $i$ is ancestor of $j$, a contradiction.
 \end{proof}

\begin{restatable}{theorem}{dagAlgorithmPattern}\label{thm:dag-algorithm-pattern}
Let $G = (V, E)$ be a DAG, $\Sigma$ a finite ($\sigma = |\Sigma|$) alphabet, $\ell: V \rightarrow \Sigma$ a labeling function, $S \in \Sigma^m$ a string and $f_S$ its failure function. We can decide whether $S$ has a match in $G,\ell$ in time $O((|V|+|E|)w + \sigma m)$, where $w = |\{i \in \{0,\ldots, m\} \mid \not \exists j, f_S(j) = i\}|$.
\end{restatable}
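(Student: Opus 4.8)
The plan is to run the DAG algorithm of Park and Kim~\cite{park1995string} along a topological ordering of $G$, but to carry at each vertex $v$ only the prefix-incomparable set $PI_v$ of \Cref{def:bv-piv} instead of the full set $B_v$ of prefix matches. Concretely, I would process the vertices $v_1,\dots,v_{|V|}$ in topological order, computing $PI_{v}$ from the sets $PI_{u}$ of its in-neighbors $u\in N^{-}_{v}$ (all of which precede $v$ and are therefore already available), and declare that $S$ has a match exactly when $m\in PI_{v}$ for some $v$.

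Before the sweep I would do all the string preprocessing: compute the failure tree $f_S$ in $O(m)$ time, the matching automaton $A_S$ in $O(\sigma m)$ time, and the auxiliary structures required by \Cref{cor:linear-parameterized-vertex} (the suffix tree of $S^r$, its balanced-parenthesis encoding and the associated $open/close$ arrays) in $O(m)$ time; the topological ordering itself is assumed given. This is where the additive $O(\sigma m)$ term comes from. For the sweep: a source $v$ gets $PI_v=\{0\}$ or $PI_v=\{1\}$ (depending on whether $\ell(v)=S[1]$) in $O(1)$ time; for every other $v$ I would invoke the $O(w\cdot d^{-}_{v})$ branch of \Cref{cor:linear-parameterized-vertex} to obtain $PI_v$, keeping each set stored sorted by position in the suffix tree of $S^r$ as that corollary requires. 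The point that makes this branch applicable with a $w$ factor is \Cref{lemma:bounded-size-prefix-incomparable}: since each $PI_u$ is prefix-incomparable, $|PI_u|\le w$, so the intermediate lists in the proof of \Cref{lemma:parameterized-vertex} have size $O(w)$. Summing the per-vertex costs, $\sum_{v}O\big((d^{-}_{v}+1)\,w\big)=O\big((|V|+|E|)\,w\big)$, which together with the preprocessing gives the claimed bound.

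For match detection I would simply test, as soon as $PI_v$ has been built, whether $m\in PI_v$; this adds only $O(\sum_v|PI_v|)=O(|V|w)$, within budget. Correctness of this test is immediate: if $m\in B_v$ then by \Cref{def:bv-piv} there is $j\in PI_v$ with $m=j$ or $S[1..m]$ a border of $S[1..j]$, but a border of $S[1..j]$ is a proper prefix of length strictly less than $j\le m$, so necessarily $j=m$ and thus $m\in PI_v$; conversely $PI_v\subseteq B_v$, so $m\in PI_v$ witnesses an occurrence of $S$. Hence $S$ has a match iff $m\in PI_v$ for some $v$.

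I expect the main obstacle to be not the running-time accounting but pinning down correctness of the sets maintained along the sweep, i.e.\ that the $PI_v$ produced really are those of \Cref{def:bv-piv}. This is exactly the correctness claim of \Cref{lemma:parameterized-vertex}, so the argument is to invoke it inductively along the topological order, checking at each step that its hypotheses hold --- in particular that every in-neighbour's stored set is the correct $PI_u$ (available because $u<v$ topologically) and that the stored lists are already in the suffix-tree order assumed by \Cref{cor:linear-parameterized-vertex}, an invariant that the per-vertex step both relies on and re-establishes.
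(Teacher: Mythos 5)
Your proposal is correct and follows essentially the same route as the paper: a topological sweep computing each $PI_v$ via the $O(w\cdot d^{-}_{v})$ branch of \Cref{cor:linear-parameterized-vertex} (with $|PI_u|\le w$ from \Cref{lemma:bounded-size-prefix-incomparable}), after $O(\sigma m)$ automaton and $O(m)$ structural preprocessing, reporting a match iff $m\in PI_v$ for some $v$. Your extra justification of the $m\in PI_v$ test and of the sorted-order invariant only spells out details the paper leaves implicit.
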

 \begin{proof}
  We compute the matching automaton $A_S$ in $O(\sigma m)$ time. Then, we process the vertices in topological order, and for each vertex $v$ we compute $PI_v$, the unique prefix-incomparable set representing $B_v$ (all prefix matches of $S$ with paths ending in $v$). We proceed according to \Cref{lemma:parameterized-vertex,cor:linear-parameterized-vertex} in $O(m)$ preprocessing time plus $O(w\cdot d^{-}_{v})$ time per vertex, adding up to $O(w(|V|+|E|))$ time in total. There is a match of $S$ in $G, \ell$ if and only if any $PI_v$ contains $m$.
 \end{proof}

 We note that $w \le m$, thus our algorithm is asymptotically as fast as the DAG algorithm, which runs in time $\Omega((|V|+|E|)m)$. However, we note that for $w$ to be $o(m)$, $S$ must be a periodic string. To see this, consider the longest prefix $S[1..i]$ of $S$, such that there exists $j > i$ with $i = f_{S}(j)$. By definition, $S[1..i]$ is a border of $S[1..j]$, thus $S[k] = S[k+j-i]$ for $k \in \{1, \ldots, i\}$, that is $S[1..j]$ is a periodic string with period $j-i$. Finally, note that if $w = o(m)$, then $m-i \in o(m)$, and thus the period $j-i \in o(m)$.

 \section{A linear time parameterized algorithm for the distance problems}\label{sec:linear-distance}

  Millani et al.~\cite{millani2020efficient} gave a $O(|V|(|V|+|E|))$ time algorithm to find a minimal forbidden path in a general graph. They used this algorithm to design branching algorithms (see e.g.~\cite{cygan2015parameterized}) for the problems of finding maximum sized sets $V' \subseteq V, d_{v} := |V'|$ and $E' \subseteq E, d_{e} := |E'|$, such that $G[V']$ and $(V, E')$ are funnels, known as vertex and edge distance to a funnel. It is know that (unless P = NP) there is an $\epsilon > 0$ such that there is no polynomial time $|V|^{\epsilon}$ approximation~\cite{lund1993approximation} for the vertex version nor $(1+\epsilon)$ approximation~\cite{millani2020efficient} for the edge version. The authors~\cite{millani2020efficient} noted that if we consider a minimal forbidden path $P$ of $G$ of length $|P| > 1$, then the edges of $P$ can be contracted until $|P| = 1$ without affecting the size of the solution. Moreover, they noted that if we consider such $P$, two in-neighbors of the first vertex and two out-neighbors of the last\footnote{This structure is known as a \emph{butterfly}.}, then $V'$ must contain at least one of those $6$ vertices and $E'$ one of those $5$ edges, deriving $O(6^{d_{v}}|V|(|V|+|E|))$ and $O(5^{d_{e}}|V|(|V|+|E|))$ time branching algorithms for each problem\footnote{After removing forbidden paths all cycles are vertex-disjoint thus the rest of the problem can be solved by removing one vertex (edge) per cycle in one $O(|V|+|E|)$ time traversal.}~\cite[Corollary 1]{millani2020efficient}. The authors also developed a more involved branching algorithm, only for the edge distance problem on DAG inputs, running in time $O(3^{d_{e}}(|V|+|E|))$~\cite[Theorem 4]{millani2020efficient}.
 
 By noting that minimal forbidden paths can be further contracted to length zero (one vertex) in the vertex distance problem, and that a minimal forbidden path can be found in time $O(|V|+|E|)$ (\Cref{lemma:unitig}) we obtain the following result.

\begin{restatable}{theorem}{linearDistance}\label{thm:linear-distance}
 Let $G = (V, E)$ be a graph. We can compute the vertex (edge) deletion distance to a funnel in time $O(5^d(|V|+|E|))$, where $d$ is the deletion distance.
 \end{restatable}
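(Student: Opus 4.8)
The plan is to follow the branching-algorithm paradigm of Millani et al.~\cite{millani2020efficient}, but plug in the linear-time minimal-forbidden-path routine of \Cref{lemma:unitig} in place of their quadratic one, and improve the branching factor for the vertex case by contracting forbidden paths all the way down. First I would recall the structural reduction: given a graph $G$ that is not yet a funnel, \Cref{lemma:unitig} finds a minimal forbidden path $P$ in $O(|V|+|E|)$ time. For the \emph{edge} distance problem, any solution $E'$ must delete at least one of the (at most) $5$ edges of the ``butterfly'' around $P$ after $P$ is contracted to a single edge — namely the two edges from the two in-neighbors of $P$'s first vertex, the edge of $P$ itself, and the two edges to the two out-neighbors of $P$'s last vertex; this gives a $5$-way branch, each branch decrementing $d$ by one. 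For the \emph{vertex} distance problem, I would observe (as the text signals) that a minimal forbidden path can be contracted to length zero without changing the optimal deletion set, so the relevant structure is a single vertex with $\ge 2$ in-neighbors and $\ge 2$ out-neighbors; then $V'$ must delete one of these $5$ vertices (the merge-fork vertex plus two in- and two out-neighbors), again a $5$-way branch.

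Next I would bound the recursion. Each node of the search tree does $O(|V|+|E|)$ work (one call to \Cref{lemma:unitig}, plus the constant-time identification of the branching set and the graph update), branches into at most $5$ subproblems, and decreases the parameter $d$ by at least one; hence the tree has depth $\le d$ and at most $O(5^d)$ nodes, for a total of $O(5^d(|V|+|E|))$ time. When the parameter hits $0$ and the current graph still contains a forbidden path, that branch fails. Finally I would handle the leftover: once all forbidden paths are eliminated, the remaining obstruction to being a funnel is only vertex-disjoint cycles (a forbidden-path-free graph's non-funnel-ness comes from cycles through merge/fork vertices, and after removing forbidden paths these cycles are vertex-disjoint), so each such cycle can be destroyed by deleting a single vertex (edge), found in one $O(|V|+|E|)$ traversal — this is charged once per successful leaf and folded into the same bound. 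Correctness of the branching follows because the chosen $5$-element set is, by the forbidden-path characterization in \Cref{thm:funnels-characterizations}, a set at least one of whose elements every feasible solution must contain; soundness of the contraction steps is the observation that contracting an internal edge of a minimal forbidden path neither creates nor destroys any other forbidden path and does not change the size of an optimal deletion set.

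The main obstacle I expect is not the branching arithmetic but making the two structural reductions airtight: (i) for the vertex version, justifying that contracting a minimal forbidden path down to a single vertex is safe — in particular that no feasible vertex-deletion set is ``lost'' and that the minimality of $P$ guarantees the butterfly neighbors are distinct enough for the $5$-way branch to be exhaustive; and (ii) arguing cleanly that after exhausting all forbidden paths the residual graph's only obstructions are vertex-disjoint cycles, so that the cheap cycle-cleanup step suffices and does not interact with the parameter budget. Both points are essentially the content of \cite[Corollary 1]{millani2020efficient}, so I would cite that reduction and focus the new argument on the replacement of the quadratic subroutine by \Cref{lemma:unitig} and on the stronger contraction in the vertex case.
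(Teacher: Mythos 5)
Your proposal follows essentially the same route as the paper: branch on the $5$-element ``butterfly'' around a minimal forbidden path found in linear time via \Cref{lemma:unitig}, contracting the path to a single edge (edge case) or a single vertex (vertex case), with the vertex-disjoint-cycle cleanup at the leaves, giving $O(5^d(|V|+|E|))$. The only point to tighten is the length-zero contraction in the vertex case, which is the new ingredient not covered by Millani et al.'s Corollary~1: the paper justifies it directly by noting that a minimum solution never deletes two vertices of the same forbidden path, since dropping one of them from the solution still leaves a funnel.
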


 \begin{proof}
 We follow the branching approach as in~\cite[Corollary 1]{millani2020efficient}, but in the case of vertex distance we further contract the forbidden paths to length $0$, the correctness of this step follows by noting that any solution containing two different vertices in a forbidden path is not minimum, since we still get a funnel by removing one of them (from the solution). As such, the number of recursive calls is $\le 5$ for both problems. Moreover, by \Cref{lemma:unitig}, we can find a minimal forbidden path in time $O(|V|+|E|)$.
 \end{proof}
\end{document}